  \providecommand\BibTeX{{%
    \normalfont B\kern-0.5em{\scshape i\kern-0.25em b}\kern-0.8em\TeX}}}
\renewcommand{\algorithmicrequire}{ \textbf{Input:}} 
\newcommand{\partitle}[1]{\medskip \noindent \textbf{#1.}}
\newcommand{\vect}[1]{\ensuremath{\mathbf{#1}}}
\newcommand{\mat}[1]{\ensuremath{\mathbf{#1}}}
\newcommand{\vh}{\vect{h}}
\newcommand{\vt}{\vect{t}}
\newcommand{\vr}{\vect{r}}
\newcommand{\mE}{\mat{E}}
\newcommand{\mG}{\mat{G}}
\newcommand{\mR}{\mat{R}}
\newcommand{\projectname}{\textit{DP-FLames }}
\newcommand{\projectnameadp}{\textit{DP-FLames-Adp }}
\begin{document}


\title{Quantifying and Defending against Privacy Threats on Federated Knowledge Graph Embedding}

\author{Yuke Hu}
\orcid{0000-0001-5780-6898}
\affiliation{
  \institution{Zhejiang University, HIC-ZJU}
  \city{Hangzhou}
  \country{China}
}
\email{yukehu@zju.edu.cn}

\author{Wei Liang}
\orcid{0000-0001-7115-5593}
\affiliation{
  \institution{Zhejiang University, HIC-ZJU}
  \city{Hangzhou}
  \country{China}
}
\email{liangwei23@zju.edu.cn}

\author{Ruofan Wu}
\orcid{0000-0002-2005-6058}
\affiliation{
  \institution{Ant Group}
  \city{Shanghai}
  \country{China}
}
\email{ruofan.wrf@antgroup.com}

\author{Kai Xiao}
\orcid{0000-0002-5181-1917}
\affiliation{
  \institution{Ant Group}
  \city{Shanghai}
  \country{China}
}
\email{xiaokai.xk@antgroup.com}

\author{Weiqiang Wang}
\orcid{0000-0002-6159-619X}
\affiliation{
  \institution{Ant Group}
  \city{Hangzhou}
  \country{China}
}
\email{weiqiang.wwq@antgroup.com}

\author{Xiaochen Li}
\orcid{0000-0002-3722-4783}
\affiliation{
  \institution{Zhejiang University, HIC-ZJU}
  \city{Hangzhou}
  \country{China}
}
\email{xiaochenli@zju.edu.cn}
 
\author{Jinfei Liu}
\orcid{0000-0003-2921-2827}
\affiliation{
  \institution{Zhejiang University, HIC-ZJU}
  \city{Hangzhou}
  \country{China}
}
\email{jinfeiliu@zju.edu.cn}

\author{Zhan Qin}
\orcid{0000-0001-7872-6969}
\affiliation{%
  \institution{Zhejiang University, HIC-ZJU}
  \city{Hangzhou}
  \country{China}
}
\email{qinzhan@zju.edu.cn}
\authornote{Corresponding author}
	




\renewcommand{\shortauthors}{Hu and Liang, et al.}

\begin{abstract}
    Knowledge Graph Embedding (KGE) is 
    a fundamental technique that extracts expressive representation from knowledge graph (KG) to facilitate diverse downstream tasks.
    The emerging federated KGE (FKGE) collaboratively trains from distributed KGs held among clients while avoiding exchanging clients' sensitive raw KGs,
    which can still suffer from privacy threats as evidenced in other federated model trainings (e.g., neural networks). However, quantifying and defending against such privacy threats remain unexplored for FKGE which possesses unique properties not shared by previously studied models.
    In this paper, we conduct the first holistic study of the privacy threat on FKGE from both attack and defense perspectives.
    For the attack, we quantify the privacy threat by proposing three new inference attacks, which reveal substantial privacy risk by successfully inferring the existence of the KG triple from victim clients.
    For the defense, we propose DP-Flames, a novel differentially private FKGE with private selection, which offers a better privacy-utility tradeoff by exploiting the entity-binding sparse gradient property of FKGE and comes with a tight privacy accountant by incorporating the state-of-the-art private selection technique.
    We further propose an adaptive privacy budget allocation policy to dynamically adjust defense magnitude across the training procedure.
    Comprehensive evaluations demonstrate that the proposed defense can successfully mitigate the privacy threat by effectively reducing the success rate of inference attacks from $83.1\%$ to $59.4\%$ on average with only a modest utility decrease.
    
\end{abstract}


\begin{CCSXML}
<ccs2012>
   <concept>
       <concept_id>10002978.10003029.10011150</concept_id>
       <concept_desc>Security and privacy~Privacy protections</concept_desc>
       <concept_significance>500</concept_significance>
       </concept>
 </ccs2012>
\end{CCSXML}

\ccsdesc[500]{Security and privacy~Privacy protections}


\keywords{Knowledge Graph Embedding, Federated Learning, Membership Inference Attack, Differential Privacy}


\maketitle
\section{Introduction} \label{sec: introduction}
Recent years have witnessed substantial development in knowledge graphs (KGs).
Many large-scale KGs, e.g., Freebase \cite{DBLP:conf/sigmod/BollackerEPST08}, NELL \cite{DBLP:conf/aaai/CarlsonBKSHM10}, YAGO \cite{DBLP:conf/nips/SocherCMN13}, and Wikidada \cite{DBLP:journals/cacm/VrandecicK14} are constructed and applied to a variety of downstream applications such as recommendation systems \cite{DBLP:conf/kdd/ZhangYLXM16}, semantic web \cite{DBLP:conf/semweb/WangWLCZQ18}, knowledge reasoning \cite{DBLP:conf/nips/BordesUGWY13}, and question answering \cite{DBLP:conf/emnlp/BordesCW14}.
Knowledge graph embedding (KGE) is proposed to tackle the underlying symbolic nature of KG that hinders a more widespread application by representing the components of KG as dense vectors in continuous space (also known as embedding space) \cite{DBLP:conf/nips/BordesUGWY13}.
Most KGs owned by various organizations (e.g., e-commerce companies and financial institutions) are incomplete and cannot be exchanged directly due to privacy concerns and regulatory pressure (e.g., GDPR \cite{GDPR} and CCPA \cite{CCPA}).
In order to exploit the complementarity between different KGs without compromising privacy, several federated KGE (FKGE) models \cite{DBLP:conf/jist/ChenZYJC21, DBLP:journals/corr/abs-2203-09553, DBLP:conf/cikm/PengLSZ021} are proposed to collaboratively train a global KGE model while preserving the sensitive KGs local to the clients by exchanging only model parameters (i.e., embedding matrices in FKGE), as illustrated in Figure \ref{fig: fede}.
\begin{figure}[ht]
	\centering
	\includegraphics[width=0.40\textwidth]{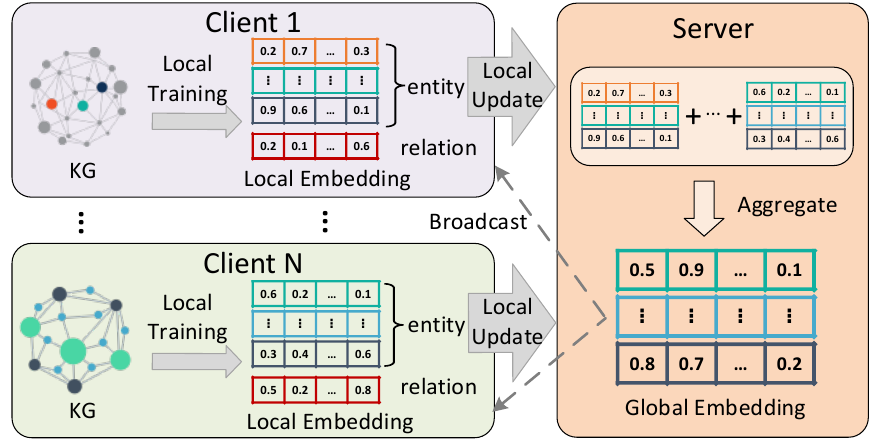}
	\vspace{-2.0ex}
	\setlength{\belowcaptionskip}{-2ex}
	\caption{Federated Knowledge Graph Embedding Training.}
	\label{fig: fede}
        \Description[A framework of federated knowledge graph embedding training]{Several clients collaboratively train a global KGE model while preserving the sensitive KGs local by exchanging only model parameters.}
	\vspace{-1.1em}
\end{figure}

However, it becomes a growing consensus that private information can still be divulged by the exchanged model parameters, as witnessed in other federated model trainings (e.g., federated neural network trainings) \cite{DBLP:conf/sp/NasrSH19, DBLP:conf/sp/MelisSCS19, DBLP:conf/nips/GeipingBD020}. 
Please refer to Appendix \ref{Appendix: related work} for more related work.
Despite the increasing deployment of FKGE across diverse areas involving sensitivity KGs, its privacy vulnerability still remains unexplored, which calls for a thorough investigation into the critical questions: Q1. To what degree the privacy of the clients' sensitive KGs can be inferred by privacy attacks? Q2. How to defend against such attacks with rigorous privacy guarantees?

In this paper, we therefore aim to fulfill this gap by conducting the first holistic study for the privacy threats on FKGE, which will significantly deepen our understanding from both privacy attack and defense perspectives. The proposed attack and defense mechanisms are tailored to the unique properties of FKGE, thus distinct from those designed for the previously explored models like neural networks in literature. 

\partitle{Attack}
We answer question Q1 by quantifying the privacy risks of FKGE through the lens of membership inference attacks (MIAs). In particular, we propose three new MIAs with the attack purpose of inferring whether a target KG triple exists in the training KG of the victim client or not, including server-initiate inference attack, client-initiate passive inference attack, and client-initiate active inference attack. To comprehensively inspect the privacy vulnerability of FKGE, these attacks each target different attack settings, where the adversary varies in 1) identity: server or clients; 2) capability: passively following the standard FKGE training or actively deviating from the standard to gain greater attack capability; 3) auxiliary dataset: the availability of an auxiliary KG dataset or not. Our attacks can successfully infer the target KG triple from the victim client with a high F1-score, which reveals that FKGE can suffer from significant privacy leakage without further defense. In addition, we launch attacks at different rounds across the FKGE training procedure and find that the attacks manifest dynamic attack strengths. Such phenomenon suggests that the defense mechanisms should adaptively vary in defense magnitude to achieve a better tradeoff between defense and utility. 

\partitle{Defense} 
To answer question Q2, we mitigate the privacy risks of FKGE by proposing the first rigorous differentially private \cite{DBLP:conf/icalp/Dwork06} protection for FKGE. Differential privacy \cite{DBLP:conf/icalp/Dwork06} is a rigorous mathematical notion to defend against adversaries with arbitrary prior knowledge. DPSGD \cite{DBLP:conf/ccs/AbadiCGMMT016} is a candidate technique for the FKGE application and we consider it as a baseline approach by straightforwardly applying DPSGD to FKGE. However, the baseline approach fails to maintain a decent privacy-utility tradeoff, which has the root causes in both the large model size and the failure to make use of the entity-binding sparse gradient property of FKGE. When applied to neural networks, DPSGD is known to provide acceptable accuracy only for small-scale models \cite{DBLP:conf/sp/Yu0PGT19, DBLP:journals/corr/abs-1908-07643, DBLP:conf/ccs/AbadiCGMMT016} but struggles to maintain decent utility for large-scale models
. Unfortunately, KGE models are usually at very large scale. For example, the TransE model \cite{DBLP:conf/nips/BordesUGWY13} has about two million model parameters on the Fb15k-237 dataset \cite{DBLP:conf/emnlp/ToutanovaCPPCG15} with an embedding dimension of 128.  
Albeit large in model parameter size, the ``active size'' of the stochastic gradient is usually much smaller for FKGE due to the entity-binding sparse gradient property, which is specific to FKGE but not possessed by general neural network models. 

Motivated by the two limitations above, we propose \textit{DP-FLames}, \underline{\textbf{D}}ifferentially \underline{\textbf{P}}rivate \underline{\textbf{F}}ederated know\underline{\textbf{L}}edge gr\underline{\textbf{a}}ph e\underline{\textbf{m}}bedding with privat\underline{\textbf{e}} \underline{\textbf{s}}election, to exploit the entity-binding sparse gradient property of FKGE to achieve a better privacy-utility tradeoff. In particular, we design the private active gradient elements selection mechanism to avoid perturbing the inactive gradient elements that incur an overwhelming amount of noise, which keeps the ``active size'' relatively small. Meanwhile, we ensure the selection itself satisfies differential privacy and account its privacy loss in a tight manner to avoid the extra noise paid to privacy selection exceeding the noise savings from the reduced gradient size. Further, inspired by the varying attack strengths across the FKGE training procedure, we design an adaptive privacy budget allocation strategy to adjust the defense magnitude accordingly. Finally, although differential privacy has been investigated for KGE privacy protection in both centralized \cite{DBLP:journals/ws/HanDGCB22} and federated settings \cite{DBLP:conf/cikm/PengLSZ021}, existing attempts are not accurate enough with privacy analysis, which renders their defense mechanism not rigorous in differential privacy guarantee. Please refer to Appendix \ref{appendix: inaccurate DP} for further discussion. 
Also, \projectname has very distinct DP mechanism designs and more involved privacy analysis compared to existing works. 

In summary, we make the following contributions:
\begin{itemize}[leftmargin=*]
    \item We conduct the first holistic study for the privacy threat on federated knowledge graph embedding, which covers both the attack and defense aspects.
    \item We propose three new inference attacks to comprehensively quantify the privacy risks of FKGE, which target diverse attack settings in terms of the adversary's identity, capability, and auxiliary information.
    \item We propose the first rigorous differentially private FKDE mechanism, which offers a better privacy-utility tradeoff by exploiting the entity-binding sparse gradient property of FKGE training and comes with a tighter privacy loss accountant by introducing the state-of-the-art private selection technique.
    \item We conduct extensive experimental evaluations on three real-world KG datasets and four FKGE models to demonstrate that the proposed attacks can pose significant privacy threats on FKGE and verify that \projectname can effectively protect FKGE from inference attacks while maintaining decent utility.
\end{itemize}
\section{preliminaries}

\subsection{Federated Knowledge Graph Embedding}\label{section: Federated Knowledge Graph Embedding}
We begin by introducing the formal definitions of the knowledge graph and the knowledge graph embedding model, as follows,
\begin{definition} \label{def: knowledge graph} (\textbf{Knowledge Graph $\mathcal{G}$ and its Embedding)}.
The knowledge graph $\mathcal{G}$ is defined as $\mathcal{G}=\{\mathcal{E}, \mathcal{R}, \mathcal{T}\}$, where $\mathcal{E}, \mathcal{R}, \mathcal{T}$ are the sets of entities, relations, and triples, respectively. A triple represents a fact $(h,r,t)\in \mathcal{E}\times \mathcal{R} \times \mathcal{E}$, which means a head entity $h$ and a tail entity $t$ is connected by a relation $r$. The KGE model represents the entities $h,t$ and relation $r$ as dense vectors $\vh, \vt, \vr$ in embedding space, which has the following per-triple loss function,
\begin{equation*} 
    \mathcal{L}(\vh,\vr,\vt)=-\log\sigma(f_r(\vh,\vt)-\gamma)-\sum_{i=1}^n p_i\log\sigma(\gamma-f_r(\vh,\vt'_i))),
\end{equation*}
where $f_r(\vh,\vt)$ is the score function that measures the plausibility of $(\vh,\vr,\vt)$, $\gamma$ is the margin, $(\vh,\vr,\vt') \notin \mathcal{T}$ is a negative triple generated by replacing the original tail entity with a random entity, n is the number of negative triples, and $p_i$ is the weight.
\end{definition}
Common KGE models include TransE \cite{DBLP:conf/nips/BordesUGWY13}, RotateE \cite{DBLP:conf/iclr/SunDNT19}, DisMult \cite{DBLP:journals/corr/YangYHGD14a}, and ComplEx \cite{DBLP:conf/icml/TrouillonWRGB16},
which have different forms of score functions. 
A triple with a higher score is more likely to be valid.
Score functions $f_r(h,t)$ of these models are described in Appendix \ref{appendix: score function}.

Federated KGE learning can further improve embeddings of all participating KGs by exploiting the overlapping entities among them, i.e., the entities in one KG usually overlap with the entities in other KGs \cite{DBLP:conf/cikm/PengLSZ021, DBLP:conf/jist/ChenZYJC21}.
The federated KGE is formally defined as follows.
\begin{definition} \label{def: federated KGE} (\textbf{Federated Knowledge Graph Embedding})
There are $m$ knowledge graphs $\{\mathcal{G}_i\}_{i=1}^m$ with potentially overlapping entity sets, each held by one client.
For the $\kappa_{th}$ round of FKGE training, the $i$-th client runs local updating steps on its local KG $\mathcal{G}_i$, local relation embedding ${\mR}_{\kappa-1}^i$ and local entity embedding  ${\mE}_{\kappa-1}^i$ for certain iterations, and then sends the updated local embedding ${\mE}_{\kappa}^i$ to the server;
The server receives $\{{\mE}_{\kappa}^i\}_{i=1}^m$ from $i\in \{1,...,m\}$ for aggregation and broadcasts the global embedding ${\mE}_{\kappa}$ to all clients.
Such communication rounds repeat until convergence.
\end{definition}

\subsection{Differential Privacy}
As a mathematical framework against adversaries with arbitrary prior knowledge, differential privacy \cite{DBLP:conf/icalp/Dwork06} can quantitatively analyze and bound the privacy loss of a data-dependent algorithm.
\begin{definition}\label{def: differential privacy} (\textbf{Differential Privacy} \cite{DBLP:conf/icalp/Dwork06})
A randomized algorithm $\mathcal{M}$ satisfies $(\epsilon, \delta)$-DP if for any two neighboring databases $D$ and $D'$ that differ in only a single entry, and for all possible outcome sets, it holds that $\mathcal{O}\in Range(\mathcal{M})$: $\Pr[\mathcal{M}(D)\in \mathcal{O}] \leq e^\epsilon\Pr[\mathcal{M}(D')\in \mathcal{O}] + \delta.$
\end{definition}
Differential privacy ensures that an adversary could not reliably infer whether one particular sample is in the dataset or not, even with arbitrary side information. 
Mironov introduces the notion of R\'{e}nyi differential privacy (RDP) \cite{DBLP:conf/csfw/Mironov17} based on the $\alpha$-R\'{e}nyi divergence $\mathbb{D}_\alpha(\cdot\|\cdot)$, which facilitates tighter privacy loss accountant. We introduce the approximate RDP definition below, which can be seen as a generalization of the original RDP introduced in \cite{DBLP:conf/csfw/Mironov17}).
\begin{definition}\label{def: approximated renyi differential privacy} (\textbf{Approximated R\'{e}nyi Differential Privacy} \cite{DBLP:conf/aistats/0005W22}).
A randomized algorithm $\mathcal{M}$ satisfies $\delta$-approximate-$(\alpha, \epsilon(\alpha))$-RDP if for any two neighboring database $D$ and $D'$, there exist events $E$ (depending on  $\mathcal{M}(D)$) and $E'$ (depending on  $\mathcal{M}(D')$) such that $\Pr[E]>1-\delta \wedge \Pr[E']>1-\delta$, and $\mathbb{D}_\alpha(\mathcal{M}(D)|E\|\mathcal{M}(D')|E') \le \epsilon(\alpha)$ for any $\alpha \ge 1$, where
\begin{small}
\begin{equation*}
    \mathbb{D}_\alpha(\mathcal{M}(D)\|\mathcal{M}(D')) := \frac{1}{\alpha-1}\log\mathbb{E}_{o\sim \mathcal{M}(D)}[(\frac{\Pr[\mathcal{M}(D)=o]}{\Pr[\mathcal{M}(D')=o]})^\alpha].
\end{equation*}
\end{small}
\end{definition}
The above $\delta$-approximate-$(\alpha, \epsilon)$-RDP reduces to the original $(\alpha, \epsilon)$-RDP  when $\delta=0$. Other properties are shown in Appendix \ref{appendix: RDP}.

\section{Attack} \label{sec: attack}
In order to quantify the privacy threat on vanilla FKGE, we propose three attacks to infer the existence of particular KG triples held by the victim clients. We first describe the attack settings and then present the detailed mechanisms of the new attacks.
The important notations are shown in Table \ref{table: notation} at Appendix \ref{appendix: notation table} to avoid confusion.
\subsection{Attack Setting} \label{subsection: threat model}
\partitle{Adversary's Goal} 
The goal of the FKGE triple inference attack $\mathcal{A}$ is to determine whether a target KG triple $(h,r,t)$ is in the training set of the victim client in a given FKGE learning, i.e., the adversary has the FKGE triple inference attack $\mathcal{A}(h,r,t)$ to output whether $(h,r,t)$ \emph{exist} or \emph{non-exist} in the KG training set of the client.
Since triples contain sensitive information of the victim client, the success of $\mathcal{A}(h,r,t)$ will incur severe privacy threats on the victim client.

\partitle{Adversary's Background Information and Capability}
To comprehensively inspect all potential privacy threat initiators in FKGE, we consider both the server and clients to be the possible adversary and delineate their background knowledge and capabilities according to their roles played in FKGE.
\begin{itemize}[leftmargin=*]
    \item Server as Adversary: the adversary has the entire entity set $\mathcal{E}$ (including the victim client's $\mathcal{E}_v$) and the periodically uploaded entity embedding matrix $\mE$ from all clients (including the victim client's $\mE_v$). In addition, to understand the upper limit of the inference capability under such setting, we provide the adversary with as ample auxiliary information as possible. Thus, the adversary has access to an auxiliary KG dataset $\mathcal{D}_{aux}$ from the same knowledge domain with the FKGE learning. In practice, such auxiliary KG dataset can come from publicly available sources  (e.g., Wikipedia) or be built upon empirical common sense (e.g., the relation between \textit{patient} and a type of \textit{disease} is \textit{diagnosed with}). We assume the server broadcasts normal global embeddings without modifying them.  
    \item Client as Adversary: the adversary has its local KG dataset and all the global entity embedding matrix broadcast by the server at each communication round. The malicious client can either follow standard FKGE protocol to send normally updated embeddings to the server (i.e., passive attack) or alter the embeddings in order to better infer the victim clients' triples (i.e., active attack). 
\end{itemize}


\begin{table}[htbp] \small
    \vspace{-1em}
    \caption{An overview of three proposed attacks.}
    \vspace{-1.5em}
    \label{table: attack}
    \begin{tabular}{cccc}
      \toprule
      Attack type & Identity & Capability & Auxiliary dataset\\
      \midrule
      $\mathcal{A}_{SI}$ & server & passive & \checkmark\\
      $\mathcal{A}_{CIP}$ & client & passive & -\\
      $\mathcal{A}_{CIA}$ & client & active & -\\
      \bottomrule
    \end{tabular}
     \Description[An overview of three proposed attacks]{Server-initiate Inference Attack is passive and need auxiliary dataset. Client-initiate Passive Inference Attack is passive and doesn't need auxiliary dataset. Client-initiate Active Inference Attack is active and doesn't need auxiliary dataset.}
    \vspace{-1.5em}
\end{table}

\partitle{Attack Taxonomy}
Table \ref{table: attack} summarizes three attacks according to varying attack settings in terms of identities, capabilities, and auxiliary information, which are referred to as server-initiate inference attack ($\mathcal{A}_{SI}$), client-initiate passive inference attack ($\mathcal{A}_{CIP}$), and client-initiate active inference attack ($\mathcal{A}_{CIA}$).

\subsection{Server-initiate Inference Attack} 
In the server-initiate inference attack, given the auxiliary dataset, the adversary suffices to infer the existence of the relation between the targeted head and tail entities, then the triple (i.e., the particular type of the already inferred relation) can be inferred by querying the auxiliary dataset. The attack proceeds in three steps.

\partitle{Step 1: Calculate Entity Embedding}
During the attack rounds, the malicious server receives the full entity embedding matrix $\mE_v$ from the victim client.
It enumerates all potential relation embeddings according to the embedding relations designated by the KGE model. Taking TransE as an example, it calculates all potential $\vr$ by  $\vr = \vt - \vh$. More generally, 
the calculation is as follows,
\begin{equation} \nonumber 
    \begin{split}     
    {\bf{R}}_{SI} = \left\{{\bf{r}}_i\right\}_{i=1}^{n_v^2-n_v}, {\bf{r}} = f_{e-r}({\bf{e}}_j, {\bf{e}}_k), 0 \leq j, k < n_v,      
    \end{split} 
\end{equation} 
where $n_v$ is the number of entities in the victim client and $f_{e-r}$ is the representation of entities and relations. ${\bf{e}}_j, {\bf{e}}_k$ are entity embeddings in victim client's $\mE_v$. When $f_{e-r}$ is too complex for the $\vr$ conversion, we use  $\vr$ by  $\vr = \vt - \vh$ as an approximation for simplicity.
 
\partitle{Step 2: Relation Inference} 
Since not all enumerated relation embeddings correspond to a real existing relation, we then need to identify the ``real'' relation. Our key intuition is that the embeddings for the real relations tend to be more concentrated together in the embedding space, while the embeddings of the fake relations usually scatter around. 
Thus, we cluster the relation embeddings into $2n_r$ clusters and identify the highly concentrated cluster centers of the relation embeddings. Given a candidate triple $(h,r,t)$ to infer the existence, if the corresponding $f_{e-r}(\vh, \vt)$ belongs to one of the concentrated clusters, we regard there exists a relation $(h,x,t)$ with its specific relation type $x$ unknown.

\partitle{Step 3: Triple Inference} 
The adversary can infer the specific type of the relation $x$ with the help of auxiliary dataset $\mathcal{D}_{aux}$.
That is, the adversary searches for the auxiliary-triple $(\mathcal{E}_1, r_{aux}, \mathcal{E}_2)$ with $h \in \mathcal{E}_1$ and $t \in \mathcal{E}_2$ to find a relation type to match $x$.
Once the adversary matches $r_{aux}$ with $x$, i.e,. $x=r_{aux}$, it successfully infers the target triple $(h,r,t)$.

\subsection{Client-initiate Passive Inference Attack}
In client-initiate passive inference attack, the adversary is a ``curious but honest'' client, which infers the particular triple of the victim client but still follows the standard FKGE training procedure. 

\partitle{Step 1: Calculate Entity Embedding}
The adversary first compares its own uploaded embedding and global entity embedding broadcast back from the server to identify the overlapping entities. Because only the embeddings of the overlapping entities will be updated in the aggregation, which can be leveraged by the adversary to identify the overlapping entities with the victim client. Then, the adversary obtains the target entity' embedding by extracting from the global embedding according to $\mathcal{E}_{CIP}, \mE_{CIP} \gets (N \times \mE_b - \mE_u)/(N-1)$,
where $\mE_b$ is the global entity embedding broadcast back from the server, $\mE_u$ is the adversary's uploaded embedding and $N$ is the number of clients advertised by FKGE.

\partitle{Step 2: Calculate Approximate Relation Embedding}
%
Next, the adversary approximates the relation embedding of the target triple with its local relation embedding. The intuition is that the local relation embeddings will gradually contain more relation information of the victim knowledge graph with the progress of the FKGE training, which can be leveraged by the adversary to approximate the target embedding with increasing proximity. 

\partitle{Step 3: Triple Inference}
With the entity and relation embeddings of the target triple, the adversary infers the existence of the triple by thresholding on the value of the score function, as follows,
\begin{equation} \nonumber
    \mathcal{A}_{CIP}(h,r,t) = 
    \begin{cases}
    exist, &f_r(\bf{h_1},\bf{t_1}) / \it f_r(\bf{h_2},\bf{t_2}) \geq \tau_{\it CIP}, \\
    non-exist, &f_r(\bf{h_1},\bf{t_1}) / \it f_r(\bf{h_2},\bf{t_2}) < \tau_{\it CIP},
    \end{cases}
\end{equation}
where $\bf{h_1},\bf{t_1}$ are corresponding embeddings of $h, t$ in attack model's embeddings $\mE_{CIP}$, $\bf{h_2},\bf{t_2}$ are corresponding embeddings of $h, t$ in adversary's local embeddings $\mE_u$ and $\tau_{CIP}$ is tunable decision threshold. $\mathcal{A}(h,r,t)$ denotes the adversary's prediction on the sample $(h,r,t)$. The rationale is that the score function quantifies the plausibility of a triple, so a lower score indicates that the target triple is more likely to exist. 

\subsection{Client-initiate Active Inference Attack}
In the client-initiate active inference attack, the adversary is a malicious client that deviates from the standard training procedure by modifies its local embeddings to gain greater attack capability. 


\partitle{Step 1: Reverse Entity Embeddings}
During the attacking rounds, the adversary reverses the tail entity embedding of the target triple $(h,r,t)$ to deliberately increase the score to the value $s_1$. After several rounds of FKGE training, the score of $(h,r,t)$ will be reduced to $s_2$ if the triple indeed exists in the KG of the victim client. 

\partitle{Step 2: Triple Inference}
The adversary infers the existence of the target triple by thresholding on the score ratio of $s_1/s_2$
\begin{equation} \nonumber
    \mathcal{A}_{CIA}(h,r,t) = 
    \begin{cases}
    exist, &s_1 / s_2 \geq \tau_{CIA}, \\
    non-exist, &s_1 / s_2 < \tau_{CIA},
    \end{cases}
\end{equation}
where $\tau_{CIA}$ is tunable decision threshold.

We further discuss the limitations of the above attacks and what will happen when the server colludes with a client in Appendix \ref{appendix: attack limitations}.

\section{Defence Mechanism} \label{sec: defense mechanism}

\begin{figure*}[t]
	\centering
	\includegraphics[width=0.95\textwidth]{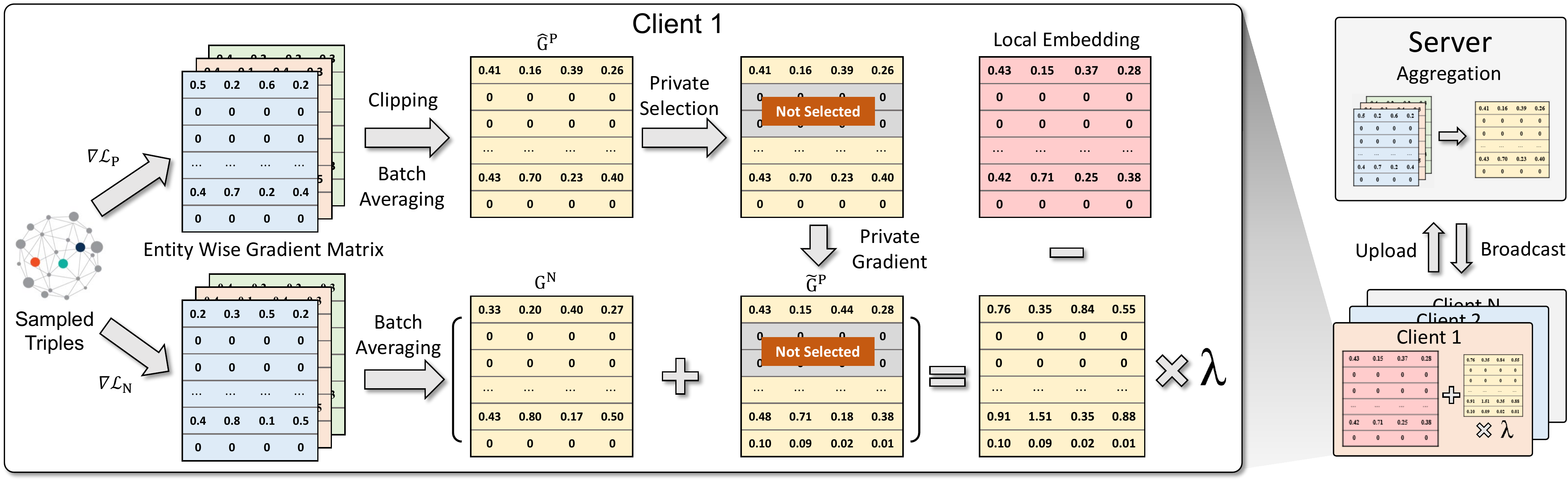}
	\setlength{\belowcaptionskip}{-2ex}
	\vspace{-0.25em}
	\caption{Overview of \projectname .}
	\label{fig: algorithm}
        \Description[The overview of DP-FLames framework]{1. The client calculate the entity wise gradient matrix. 2. Clip the gradient matrix and average it among the matrices. 3. Do private selection 4. Exchange parameters with the server.}
\end{figure*}

\begin{algorithm}[b]
	\caption{\projectname}
	\label{alg: dpsgd with private selection}
	{\small{
			\begin{algorithmic}[1]
				\REQUIRE {Triple set $\{(h,r,t)\}$, loss function $\mathcal L_{p/n}(\vh,\vr,\vt)$.
				Parameters: learning rate $\lambda$, batch size $B$, noise allocation parameters $\sigma, \eta, \Delta$, clipping bound $C_1, C_2$, private selection parameters $\epsilon_r$, $\sigma_p$, $\delta_t$}
				\COMMENT{ {\color{blue}The blue lines indicate advanced algorithm designs of \projectname.}}
				\renewcommand{\algorithmicrequire}{ \textbf{Server side procedure:}}
				\REQUIRE
				\STATE Align the entities for all clients
				\STATE Initialize and broadcast $\mE_0$ to all clients
				\FOR {round $\kappa\in [K]$}
				\STATE Collect $\mE_{\kappa}^c$ from each client $c$
				\STATE Compute $\mE_{\kappa+1}$ by averaging the embeddings of overlapped entities
				\STATE Broadcast $\mE_{\kappa+1}$ to all clients 
				\ENDFOR
                \renewcommand{\algorithmicrequire}{ \textbf{Client (c) side procedure:}}
				\REQUIRE
				\FOR {round $\kappa\in [K]$}
				\STATE Receive $\mE_\kappa$ from the server
				\FOR {$\tau\in [T]$}
				\STATE Take a random sample set $B$ with sampling probability $q = B/N$
				\STATE for each $i \in [B]$, compute $\mG_i^{p}\leftarrow \nabla \mathcal L_p(\vh_i,\vr_i,\vt_i)$ 
				\STATE $\widehat{\mG}_i^{p} \leftarrow \mG_i^{p}/\max\{1,\frac{\|\mG_i^{p}\|_2}{C_1}\}$ 
				\STATE {\color{blue} Compute $\overline{\mG}_i^{p}(j) \leftarrow \widehat{\mG}_i^{p}(j)/\max\{1,\frac{\|\widehat{\mG}_i^{p}(j)\|_2}{C_2}\}$ for rows in $\widehat{\mG}_i^{p}$}
				\STATE {\color{blue}$\mathcal{I}\leftarrow {\rm PrivSelection}(\sum_i\overline{\mG}_i^{p})$} 
				\STATE $\widetilde{\mG}^p(\mathcal{I}) \leftarrow \frac{1}{B}(\sum_i\overline{\mG}_i^{p}(\mathcal{I})+\mathcal{N}(0,\sigma^2C_1^2I_{k\times k}))$ 
				\STATE $\mG^{p}\leftarrow \frac{1}{N}\sum_{i}\mG_i^p$
				\STATE {\color{blue}$\mG^{n}\leftarrow \frac{1}{N}\sum_{i}\nabla \mathcal L_n(\vh'_i,\vr'_i,\vt'_i)$}
				\STATE $\mE^c_{\kappa_{\tau+1}} \leftarrow \mE^c_{\kappa_{\tau}}-\lambda (\widetilde{\mG}^p+\mG^n)$; $\mR^c_{\kappa_{\tau+1}} \leftarrow \mR^c_{\kappa_{\tau}}-\lambda (\mG^p+\mG^n)$ 
				\ENDFOR
				\STATE Upload $\mE^c_{\kappa+1} \leftarrow \mE^c_{\kappa_{T}}$ to server
				\IF {{\color{blue}$\kappa$ mod ${\rm validation\_interval}=0$ $\wedge$ ${\rm MRR}_t - {\rm MRR}_{t-1} < \Delta$}}
				\STATE {\color{blue} $\sigma = \eta \cdot \sigma$} \COMMENT{ {\color{blue}This line refers to the adaptive privacy budget allocation}}
				\ENDIF
				\ENDFOR
				\ENSURE {$\mE_K$, $\{\mE_K^c\}_{c=1}^m$}
	\end{algorithmic}}}
\end{algorithm}
\setlength{\floatsep}{1ex}

In this section, we propose \projectname to provide rigorous privacy protection for the FKGE model. We first present a baseline approach by straightforwardly applying DPSGD to FKGE to identify the following challenges: 1) how to avoid the significant decrease in model utility during privacy protection? 2) how to dynamically adjust the protection magnitude throughout the training process, given that the privacy attacks have varying attack strengths? \projectname is motivated by both challenges to simultaneously achieve rigorous differential privacy protection and decent model utility.

\subsection{Baseline Defense: DPSGD for FKGE}\label{subsection: baseline approach}
\partitle{Identify DP Granularity for KGE}
Since the two existing neighboring definitions for graph data \cite{DBLP:conf/icdm/HayLMJ09}, i.e., node-neighbor and edge-neighbor, are not suitable for the KGE model (as discussed in Appendix \ref{appendix: granularity}), 
we introduce the following KG triple-neighbor definition tailored to the DP protection for KGE.


\begin{definition}\label{def: neighboring knowledge graph}(\emph{Triple-level Neighboring Dataset for Knowledge Graph Embedding}).
Two knowledge graphs $\mathcal{G}_1=\{\mathcal{E}_1, \mathcal{R}_1, \mathcal{T}_1\}$ and $\mathcal{G}_2=\{\mathcal{E}_2, \mathcal{R}_2, \mathcal{T}_2\}$ are neighboring, if $\mathcal{E}_1=\mathcal{E}_2$, $\mathcal{R}_1=\mathcal{R}_2$, and $\exists(h,r,t) \in \mathcal{T}_1 \cup \mathcal{T}_2$, s.t. $(\mathcal{T}_1 \setminus \mathcal{T}_2)\cup(\mathcal{T}_2 \setminus \mathcal{T}_1)=\{(h,r,t)\}$.
\end{definition}

\partitle{Baseline Algorithm}
We demonstrate the baseline defense mechanism by applying DPSGD to KGE in the federated learning setting, which is based on the prominent DPSGD method in the differentially private deep learning area \cite{DBLP:conf/ccs/AbadiCGMMT016}. 
Algorithm \ref{alg: dpsgd with private selection} without {\color{blue}blue lines} provides summarization for the baseline algorithm.
It has three key steps on the client side: 
1) sample a random batch with sampling probability $q=B/N$ and compute per-example-gradient $\mG^i = \nabla \mathcal L(\vh, \vr, \vt)$ (Line 11-12 in Algorithm \ref{alg: dpsgd with private selection});
2) bound the influence of any sample by clipping the $\ell_2$ norm of the per-example gradient with a clipping threshold $C_1$, i.e. $\hat{\mG}^i = \mG^i/\max\{1,\frac{\|\mG^i\|_2}{C_1}\}$ (Line 13 in Algorithm \ref{alg: dpsgd with private selection});
3) inject Gaussian noise calibrated with a noise multiplier $\sigma$ (Line 16 in Algorithm \ref{alg: dpsgd with private selection}).
Then, the clients periodically upload the local updates to the server and receive the server aggregated global broadcasting.




\partitle{Limitation Analysis for the Baseline}
The limitation roots in both the large model size and the failure to make use of the entity-binding sparse gradient property of FKGE. 
KGE models are usually at very large scale. For example, TransE model on the Fb15k-237 dataset has two million model parameters.
Albeit large in model parameter size, the ``active size'' of the stochastic gradient is usually much smaller for FKGE models due to the entity-binding sparse gradient property.
The \emph{entity-binding sparse gradient property} originates from the loss function of the KGE model \cite{DBLP:conf/iclr/SunDNT19} $\mathcal{L}(\vh,\vr,\vt)$ (as shown in Section \ref{section: Federated Knowledge Graph Embedding}).
The per-triple gradient of entity embedding involves only $\vh$ and $\vt$ as its active elements, while all other elements will remain inactive. 
For the TransE model on the Fb15k-237 dataset with a batch size of 64, only $1\%$ are active gradient elements.
Intuitively, letting the $99\%$ zero-gradient elements remain zero in a differentially private way will avoid a great amount of unnecessary perturbation and improve the utility.

\subsection{Advanced Defense: \projectname} \label{subsection: advanced approach}

There are several techniques for screening out the active gradient such as sparse vector technique (SVT) \cite{DBLP:journals/pvldb/LyuSL17, DBLP:journals/fttcs/DworkR14, DBLP:conf/kdd/LeeC14} and $k$-time exponential mechanism \cite{DBLP:conf/nips/DurfeeR19}.
However, these techniques fail to provide tight privacy loss accountant. We utilize the state-of-the-art Report-Noisy-Max with propose-test-release (PTR) for data-dependent top-$k$ private selection \cite{DBLP:conf/aistats/0005W22} (as shown in Algorithm \ref{alg: private selection}) based on two reasons: (1) PTR privately selects $k$ vectors at one shot while SVT and exponential mechanisms need $k$-time composition of privacy budget; (2) although there is a probability that the test in PTR fails, which should be the main bottleneck of this solution, the privacy loss is under control since no extra privacy budget will be consumed by the private gradient step that will be discarded in this case. 
We describe the details of private selection (Algorithm \ref{alg: private selection}) as follows.


\partitle{Data-Dependent Choice of $k$} 
The first step is to determine the number of active gradient rows $k$.
Suppose that there are $n$ rows in the gradient matrix $\mG$, with each referring to the gradient vector of its entity's embedding. 
Only the gradient vectors correspond to the sampled batch are active. 
However, we can not preset a uniform $k$ because the number of involved entities varies from batch to batch.
Thus we adopt a data-dependent approach to determine $k$.
Specifically, we first sort the rows in $\mG$ by $\ell_2$ norm clipped with $C_2$ (Line 14 in Algorithm \ref{alg: dpsgd with private selection}) into descending order (Line 1) and then look for the maximal gap between any two adjacent vectors with Gumbel noise
\cite{DBLP:conf/nips/DurfeeR19} (Line 2).
The regularizer $r(j)$ in Line 2 encodes the prior knowledge that the actual $k$ should be within $[B, 2B]$, which takes the following form,
\begin{equation} \nonumber
    r(j) = 
    \begin{cases}
    c, &{\rm if} j\in[B,2B], \\
    -\infty, &{\rm if} j\notin [B,2B],\\
    \end{cases}
\end{equation}
where $c$ is an arbitrary constant that can be narrowed down given additional prior knowledge.
The maximal $\ell_2$-norm gap between $\|\mG(k)\|_2$ and $\|\mG(k+1)\|_2$ is recorded as $d_k$ for subsequent PTR process (Line 3).

\setlength{\floatsep}{1ex}
\setlength{\textfloatsep}{1ex}

\begin{algorithm}[htbp]
	\caption{Private Selection}
	\label{alg: private selection}
	{\small{
			\begin{algorithmic}[1]
				\REQUIRE {Gradient matrix $G$, clipping bound $C_2$, RDP parameters $\sigma_r$, $\sigma_{p}$, $\delta_t$}
				\STATE sort rows in $G$ by $\ell_2$ norm into a descending order: 
				$\|G(1)\|_2, \|G(2)\|_2, $ $\dots, \|G(n)\|_2$ 
				\STATE $k \leftarrow Argmax_{j\in[n-1}\{\|G(j)\|_2 - \|G(j+1)\|_2 + r(j) +\rm {Gumbel}(2C_2\sigma_r)$ 
				\STATE $d_k \leftarrow \|G(k)\|_2 - \|G(k+1)\|_2$ 
				\STATE $\hat d_k \leftarrow \max\{C_2, d_k\} + \mathcal{N}(0, \sigma_{p}^2C_2^2)-\sigma_{p}C_2\sqrt{2\log(1/\delta_t)}$ 
				\IF {$\hat d_k>C_2$}
				\STATE$\mathcal{I} \leftarrow \{i_{(1)}, \dots, i_{(k)}\}$
				\ELSE 
				\STATE $\mathcal{I}\leftarrow \perp$
				\ENDIF
				\ENSURE Index set of top-$k$ gradients $\mathcal{I}$
	\end{algorithmic}}}
\end{algorithm}
\setlength{\floatsep}{1ex}

\partitle{Active Gradient Indexes Release}
Once $k$ and $d_k$ are determined, we privately release the indexes of the top-$k$ gradient vectors.
Instead of calling SVT or exponential mechanisms for $k$ time, we use the PTR-based approach \cite{DBLP:conf/aistats/0005W22, DBLP:conf/stoc/DworkL09} to release $k$ indexes at one shot.
We first add Gaussian noise to the max of $d_k$ and $C_2$ as the ``proposed'' bound in the PTR framework (Line 4).
Then, we privately ``test'' whether $\hat d_k$ is a valid upper bound(Line 5).
The index set of the selected top-$k$ vector is "released" if $\hat d_k$ passes the test, otherwise nothing will be reported (Line 6-8).

\partitle{Negative Sampling} To take full advantage of the sparsity in the gradients, we modify the self-adversarial negative sampling. The original method only replaces the tail entity $t$ of a valid triple $(h,r,t)$ to a random $t'$ as the negative sample \cite{DBLP:conf/iclr/SunDNT19}.
The negative sample set usually consists of hundreds of negative triples, which makes the gradient much denser. 
We instead generate completely random negative sample $(h',r',t')$ instead of $(h, r, t')$ so that the negative gradient can be released without perturbation.
Whereas, the model utility is impaired when doing so because the randomly generated head-relation pair
is usually semantically meaningless.
There are two fixes: either by introducing real head-relation pairs from a public dataset, or by generating the negative samples from the private dataset with differentially private data synthesis techniques \cite{DBLP:conf/uss/00010LH0H0021, DBLP:journals/pvldb/GeMHI21, DBLP:conf/asiaccs/Li22}.
We adopt the first one for simplicity.
The loss function can be seperated into positive and negative loss components and the gradient of the positive loss component is still highly sparse.
\begin{small}
    \begin{gather}
        \mathcal{L}_{p}(\vh,\vr,\vt)=-\log\sigma(f_r(\vh,\vt)-\gamma),\\
        \mathcal{L}_{n}(\vh,\vr,\vt)= -\sum_{i=1}^n p(\vh,\vr,\vt')\log\sigma(\gamma-f_r(\vh,\vt'_i))).
    \end{gather}
\end{small}
The gradient for Eq.(1) is processed by \projectname, while the gradient for Eq.(2) is by vanilla SGD (Line 18 in Algorithm \ref{alg: dpsgd with private selection}).

\subsection{Adaptive Privacy Budget Allocation}\label{subsection: adaptive allocation}
In our experiment of active attack, the success rate peaks in the first few rounds of training and drops as the model converges. 
This motivates us to provide greater defense in the first few rounds and gradually lower the defense in the later rounds for better utility.

We propose to adaptively adjust the noise scale $\sigma$ based on the validation accuracy during the training procedure (Line 23-25 in Algorithm \ref{alg: dpsgd with private selection}).
Specifically, we set the initial $\sigma$ relatively large to defend against the inference attack and calculate the validation accuracy every few rounds.
If the verification accuracy no longer increases, we decrease $\sigma$ by multiplying it with a coefficient $\eta<1$ until convergence.
More details can be seen in Appendix \ref{appendix: adaptive allocation}.

\begin{table*}[hbt]
 \vspace{-1em}
  \caption{\small F1-score of Attacks.}
  \label{table: attack_f1score}
  \vspace{-4mm}
  \footnotesize
  \begin{center}
   \setlength{\tabcolsep}{1.7mm}{
    \begin{tabular*}{\textwidth}{ @{\extracolsep{\fill}} lccccccccccccc}
     \toprule
     \multirow{2.5}{*}{Dataset} &\multicolumn{4}{c}{\textbf{Client-Initiate Passive Inference}}  & \multicolumn{4}{c}{\textbf{Client-Initiate Active Inference}} & \multicolumn{2}{c}{\textbf{Server-Initiate Inference}} \\
     \cmidrule(lr){2-5} \cmidrule(lr){6-9} \cmidrule(lr){10-11} 
     & TransE & RotatE & DistMult & ComplEx & TransE & RotatE & DistMult & ComplEx & TransE & RotatE\\
     \midrule
     
     fb15k-237 & 0.724 & 0.8211 & 0.7809 & 0.7749 & 0.8779 & 0.8403 & 0.7346 & 0.7949 & 0.878 & 0.7827\\
     nell-955 & 0.8243 & 0.8728 &	0.8686 &	0.8429 & 0.8817 &	0.8462 & 0.7337	& 0.7848 & 0.9023 &	0.748\\
     \bottomrule
   \end{tabular*}}
   \end{center}
   \Description[F1-score of attacks]{CIP performs well on all four KGE algoritms with F1-Scrore around $0.8$. CIA usually outperforms CIP on TransE and RotatE with F1-Scrore around $0.85$, while does not perform as well as CIP on DisMult and ComplEx. SI only works on translational distance models, i.e., TransE and RotatE.}
   \vspace{-1em}
\end{table*}

\subsection{Privacy Analysis}
In this part, we provide the privacy analysis for \projectname. We first provide the per-iteration privacy analysis, which consists of the RDP privacy accountants for both the private selection part and the private gradient part. Then, we compose the per-iteration RDP privacy accountants over T iterations and convert the overall RDP result to DP guarantee. All proofs are deferred to Appendix \ref{appendix: proof}.

\partitle{Private Gradient Analysis}
The RDP accountant provides tight privacy loss tracking and has been adopted by many mainstream packages for DPSGD implementation, e.g., \cite{Tensorflow-Privacy, PyVacy, Pytorch-Opacus}. 
Lemma \ref{lemma:dpsgd} below shows the per-iteration privacy guarantee for the private gradient part in terms of RDP accountant, which includes Lines 11,13,16 in Algorithm \ref{alg: dpsgd with private selection}.
\begin{lemma}\label{lemma:dpsgd}
    The private gradient steps in Algorithm \ref{alg: dpsgd with private selection} satisfies $(\alpha, \epsilon_g(\alpha))$-RDP, where $\epsilon_g(\alpha)=\frac{2q^2\alpha}{\sigma}$ for
    $1 < \alpha \le \min(\frac{1}{2}\sigma^2L-2\log\sigma,  \frac{1/2\sigma^2L^2-\log5-2\log\sigma}{L+\log(q\alpha)+1/(2\sigma^2)}),$
    where the sampling ratio $q=B/N$ and $L=\log(1+\frac{1}{q(\alpha-1)})$.
\end{lemma}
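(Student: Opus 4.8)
The plan is to recognize the private gradient steps (Lines~11, 13, 16 of Algorithm~\ref{alg: dpsgd with private selection}) as precisely one iteration of the subsampled Gaussian mechanism: Poisson (or uniform) subsampling with ratio $q = B/N$, followed by per-example $\ell_2$-clipping to bound sensitivity by $C_1$, followed by addition of spherical Gaussian noise $\mathcal{N}(0,\sigma^2 C_1^2 I)$. Because each per-example gradient $\widehat{\mG}_i^p$ has $\ell_2$-norm at most $C_1$ after clipping (Line~13), the $\ell_2$-sensitivity of the summed, clipped gradient with respect to the triple-level neighboring relation of Definition~\ref{def: neighboring knowledge graph} (adding or removing a single triple) is exactly $C_1$. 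Hence the noise multiplier relative to the sensitivity is $\sigma$, and the mechanism is an instance of the Sampled Gaussian Mechanism (SGM) analyzed by Mironov, Talwar and Zhang and by Abadi et al.\ via the moments accountant.

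The key steps, in order, are: (i) argue the sensitivity bound and thereby reduce the step to the SGM with parameters $(q,\sigma)$; (ii) invoke the standard RDP bound for the SGM, which states that one step satisfies $(\alpha,\epsilon_g(\alpha))$-RDP with $\epsilon_g(\alpha)$ controlled by the log-moment of the privacy loss random variable for a mixture of two Gaussians; (iii) specialize the known asymptotic/closed-form upper bound on that log-moment, namely $\epsilon_g(\alpha) \le \tfrac{2q^2\alpha}{\sigma}$ (here $\sigma$ plays the role of $\sigma^2$ in some references — one must be careful about whether $\sigma$ denotes the noise multiplier or its square, and match the convention used in the statement), valid precisely in the regime where the higher-order terms in the moment expansion are dominated by the leading quadratic term; (iv) verify that this valid regime is exactly the stated range $1 < \alpha \le \min\!\big(\tfrac{1}{2}\sigma^2 L - 2\log\sigma,\ \tfrac{(1/2)\sigma^2 L^2 - \log 5 - 2\log\sigma}{L + \log(q\alpha) + 1/(2\sigma^2)}\big)$ with $L = \log\!\big(1 + \tfrac{1}{q(\alpha-1)}\big)$, by reproducing the case analysis from Abadi et al.'s Lemma~3 (the binomial-expansion bound on $\mathbb{E}[(\mu_0/\mu)^\alpha]$ split according to whether the dominating term is the $t=0$, $t=1$, or $t\ge 2$ part of the expansion). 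One should also note that the negative-loss gradient step (Line~18) is computed by vanilla SGD on randomly generated negative samples that do not depend on the private triple set in a sensitivity-increasing way — or more precisely, its contribution is handled separately / has zero additional privacy cost as argued in the Negative Sampling paragraph — so it does not enter this accountant.

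The main obstacle I expect is step~(iv): carefully matching the constraint region. The bound $\epsilon_g(\alpha) = 2q^2\alpha/\sigma$ is not valid for all $\alpha$; it is the leading-order term, and one must show the tail terms in the moment generating function of the privacy loss are negligible under the stated upper bounds on $\alpha$. This requires faithfully tracking the constants ($\log 5$, the $2\log\sigma$ corrections, the $1/(2\sigma^2)$ term, the definition of $L$) through the two-part bound in the moments-accountant proof, and checking that the two arguments of the $\min$ correspond exactly to the two sub-cases (small $\alpha$ where $L$ is the binding quantity vs.\ moderate $\alpha$ where the denominator $L + \log(q\alpha) + 1/(2\sigma^2)$ is binding). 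A secondary subtlety is confirming that the triple-level neighboring notion yields add/remove semantics consistent with the subsampling analysis (so that the ``differing'' example is either present-then-absent or absent-then-present, matching the two-Gaussian-mixture setup), which is straightforward given Definition~\ref{def: neighboring knowledge graph} but worth stating explicitly.
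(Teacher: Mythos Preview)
Your proposal is correct and, in fact, considerably more thorough than what the paper itself offers. The paper does not supply an independent proof of Lemma~\ref{lemma:dpsgd}: although the text says all proofs are deferred to the appendix, the appendix only proves Theorem~\ref{lemma:private selection}, and Lemma~\ref{lemma:dpsgd} is effectively treated as a direct quotation of the known RDP bound for the subsampled Gaussian mechanism (the moments accountant of Abadi et al.\ and its RDP refinements), with the surrounding prose simply noting that this accountant is standard and implemented in mainstream packages.

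Your plan---reduce Lines~11, 13, 16 to one step of the Sampled Gaussian Mechanism via the triple-level sensitivity bound from Definition~\ref{def: neighboring knowledge graph}, then invoke the closed-form leading-order RDP bound and verify the validity region on $\alpha$ by tracking the binomial-expansion case split---is exactly the derivation underlying the cited result, so it is the right approach and strictly more explicit than the paper. Your caveat about the $\sigma$ versus $\sigma^2$ convention in $\epsilon_g(\alpha)=2q^2\alpha/\sigma$ is well taken: the usual statement has $\sigma^2$ in the denominator, so when you write this up you should flag that the lemma as stated appears to follow a nonstandard convention (or contains a typo) and make clear which convention your derivation uses. Your observation that the negative-sample gradient (Line~18) carries no additional privacy cost under the paper's public/random negative-sampling design is also correct and matches the paper's Negative Sampling discussion.
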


\partitle{Private Selection Analysis}
The private selection steps consist of two stages: Report-Noisy-Max and Propose-Test-Release, and can be analyzed with approximated RDP. 
Theorem \ref{lemma:private selection} shows the per-iteration privacy guarantee for the private selection part in terms of RDP accountant with subsampling, which is shown in Algorithm \ref{alg: private selection} as well as Line 11 in Algorithm \ref{alg: dpsgd with private selection}.
\begin{theorem}\label{lemma:private selection}
    The private selection in Algorithm \ref{alg: private selection} with sampling ratio $q$ satisfies $\delta_t$-approximated-$(\alpha, \epsilon_s(\alpha))$-RDP, where
    \begin{small}
    \begin{align}
        \nonumber &\epsilon_s(\alpha)=\frac{1}{\alpha-1}\log(p^2\tbinom{\alpha}{2}\min\{4(e^{\epsilon(2)}-1), e^{\epsilon(2)}\min\{2, (e^{\epsilon(\infty)}-1)^2\}\}\\
        \nonumber &+\sum_{j=3}^\alpha q^j\tbinom{\alpha}{j}e^{(j-1)\epsilon(j)}\min\{2,(e^{\epsilon(\infty)}-1)^j\}+1)
        \text{and } \epsilon(\alpha)=\frac{\alpha}{8\sigma_r^2}+\frac{\alpha}{2\sigma_p^2}.
    \end{align}
    \end{small}
\end{theorem}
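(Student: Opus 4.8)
\textbf{Proof proposal for Theorem \ref{lemma:private selection}.}

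The plan is to decompose the private selection mechanism into its two constituent randomized stages---the Report-Noisy-Max step that selects the index $k$ (and implicitly the index set $\mathcal{I}$) using Gumbel noise, and the Propose-Test-Release step that validates the noisy gap $\hat d_k$ against the threshold $C_2$---and to analyze each under the approximate RDP framework of Definition \ref{def: approximated renyi differential privacy}, then combine them and fold in amplification by subsampling. First I would argue that, conditioned on the high-probability event that the PTR test behaves ``correctly'' (i.e., the Gaussian tail term $\sigma_p C_2 \sqrt{2\log(1/\delta_t)}$ dominates the deviation so that a non-neighboring perturbation cannot flip the test outcome), the whole mechanism reduces to a Report-Noisy-Max / $k$-time exponential-mechanism-style selection whose $\ell_\infty$-sensitivity in the sorted-norm vector is controlled by the per-row clipping bound $C_2$. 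This is exactly where the $\delta_t$ in the statement enters: the event $E$ (resp.\ $E'$) is ``the proposed bound $\hat d_k$ is a genuine upper bound on the local sensitivity,'' which holds with probability at least $1-\delta_t$ by a standard Gaussian tail bound, and on its complement we simply absorb the loss into the approximation parameter rather than the RDP curve. I would cite the PTR-for-top-$k$ machinery of \cite{DBLP:conf/aistats/0005W22} for the precise form of this reduction.

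Next I would establish the pure-RDP curve $\epsilon(\alpha)=\frac{\alpha}{8\sigma_r^2}+\frac{\alpha}{2\sigma_p^2}$ for the \emph{unsubsampled} conditional mechanism. The first term comes from the Gumbel-noise Report-Noisy-Max stage: Gumbel noise with scale $2C_2\sigma_r$ applied to a score vector of $\ell_\infty$-sensitivity $2C_2$ is equivalent to the exponential mechanism with parameter $\varepsilon_0 = 1/(2\sigma_r)$ per the Gumbel-max trick, and the RDP of the exponential mechanism gives a contribution on the order of $\alpha \varepsilon_0^2/8 = \alpha/(8\sigma_r^2)$. The second term, $\alpha/(2\sigma_p^2)$, is the Gaussian-mechanism RDP for the one-dimensional noisy-gap release in Line 4 of Algorithm \ref{alg: private selection}, where the relevant sensitivity of $\max\{C_2,d_k\}$ across triple-neighboring datasets is $C_2$ and the noise standard deviation is $\sigma_p C_2$, yielding the textbook $\alpha/(2\sigma_p^2)$. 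Adding these two independent contributions by RDP composition gives the stated $\epsilon(\alpha)$.

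Finally I would apply the tight subsampled-RDP amplification bound for mechanisms satisfying RDP, in the moment-expansion form whose leading terms are a second-order ($j=2$) term proportional to $q^2\binom{\alpha}{2}$ times the minimum of $4(e^{\epsilon(2)}-1)$ and $e^{\epsilon(2)}\min\{2,(e^{\epsilon(\infty)}-1)^2\}$, plus higher-order terms $\sum_{j=3}^{\alpha} q^j\binom{\alpha}{j}e^{(j-1)\epsilon(j)}\min\{2,(e^{\epsilon(\infty)}-1)^j\}$, all inside a $\log$ and scaled by $1/(\alpha-1)$; this is precisely the amplification lemma from \cite{DBLP:conf/aistats/0005W22} (it mirrors the Wang--Balle--Kasiviswanathan-style bound), and plugging in our $\epsilon(\cdot)$ yields $\epsilon_s(\alpha)$ verbatim. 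Combining the subsampled curve with the $\delta_t$ failure probability of the PTR test (which carries through subsampling since the conditioning event is data-independent in the relevant sense) gives the claimed $\delta_t$-approximate-$(\alpha,\epsilon_s(\alpha))$-RDP.

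The main obstacle I anticipate is the careful sensitivity bookkeeping at the PTR step: one must verify that under the triple-neighboring relation of Definition \ref{def: neighboring knowledge graph} the sorted-norm gap $d_k$ and the quantity $\max\{C_2,d_k\}$ genuinely have sensitivity bounded by $C_2$ (not $2C_2$), that the regularizer $r(j)$ restricting $k$ to $[B,2B]$ does not inflate sensitivity, and that the ``correct test'' event can be defined consistently for both $D$ and $D'$ so that the two conditional distributions in the approximate-RDP divergence are genuinely comparable---this is the delicate point where a naive argument would lose a constant factor or, worse, fail to produce a valid coupling of the events $E$ and $E'$. The rest is assembling known RDP facts (Gumbel-max/exponential-mechanism RDP, Gaussian-mechanism RDP, composition, subsampling amplification) in the right order.
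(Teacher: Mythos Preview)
Your proposal is correct and follows essentially the same route as the paper's own proof: decompose Algorithm~\ref{alg: private selection} into the Gumbel-noise Report-Noisy-Max stage (yielding $(\alpha,\alpha/(8\sigma_r^2))$-RDP via the bounded-range analysis of the exponential mechanism) and the Gaussian-based PTR stage (yielding $\delta_t$-approximate-$(\alpha,\alpha/(2\sigma_p^2))$-RDP), compose to obtain $\epsilon(\alpha)$, and then invoke the Wang--Balle--Kasiviswanathan subsampled-RDP amplification bound to arrive at $\epsilon_s(\alpha)$. The paper's proof is terser and does not spell out the sensitivity bookkeeping or the event-coupling issue you flag as the main obstacle, but the skeleton of the argument is identical.
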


The client needs several iterations
in Algorithm \ref{alg: dpsgd with private selection} to train the local embedding. The privacy guarantee of $T$-iteration composition under RDP is shown in Theorem \ref{theorem: composition} below.
\begin{theorem}\label{theorem: composition}
    The composition of $\ T$ iterations of updates in Algorithm \ref{alg: dpsgd with private selection} satisfies $T\cdot\delta_t$-approximated-$(\alpha, \sum_{i=1}^T(\mathbb{I}_i\cdot \epsilon_g^i(\alpha)+\epsilon_s^i(\alpha)))$-RDP and $(\sum_{i=1}^T(\mathbb{I}_i\cdot \epsilon_g^i(\alpha)+\epsilon_s^i(\alpha))+\frac{\log(1/\delta)}{\alpha-1}, T\cdot\delta_t+\delta)$-DP where $\mathbb{I}_i$ indicates whether the ``test'' in PTR is passed, $\alpha$ meets the constrain in Lemma \ref{lemma:dpsgd} and $\epsilon^i(\alpha)$ is the privacy budget of the $i_{th}$ iteration.
\end{theorem}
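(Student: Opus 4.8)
The plan is to prove Theorem~\ref{theorem: composition} bottom-up: first establish a per-iteration approximate-RDP bound by composing the two privacy-consuming sub-routines inside one local iteration, then apply the adaptive sequential composition property of approximate RDP across the $T$ local iterations, and finally convert the resulting approximate-RDP bound to an $(\epsilon,\delta)$-DP bound. Throughout, I fix a Rényi order $\alpha$ satisfying the constraint of Lemma~\ref{lemma:dpsgd}; this will remain the binding admissibility condition at the end.

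First I would isolate, within the $i$-th local iteration, the only operations that touch the private triple set: (a) the call to PrivSelection (Algorithm~\ref{alg: private selection}, Line~15 of Algorithm~\ref{alg: dpsgd with private selection}), which by Theorem~\ref{lemma:private selection} is $\delta_t$-approximate-$(\alpha,\epsilon_s^i(\alpha))$-RDP with subsampling ratio $q$; and (b) the noisy release of the selected gradient rows at Line~16. Step (b) is executed only on the event $\mathbb{I}_i=1$ that the PTR test passes ($\hat d_k>C_2$); conditioned on that event it is the subsampled Gaussian mechanism on the doubly-clipped per-example gradients, hence $(\alpha,\epsilon_g^i(\alpha))$-RDP by Lemma~\ref{lemma:dpsgd}, while on $\mathbb{I}_i=0$ the released object is the fixed symbol $\perp$ and contributes $0$. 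The negative-loss gradient at Line~18 is formed from completely random (or public) head–relation pairs and is therefore data-independent, contributing nothing. Treating iteration $i$ as the adaptive composition of (a) and the (possibly null) step (b), and invoking the adaptive composition rule for approximate RDP from Appendix~\ref{appendix: RDP}, iteration $i$ is $\delta_t$-approximate-$\big(\alpha,\ \epsilon_s^i(\alpha)+\mathbb{I}_i\cdot\epsilon_g^i(\alpha)\big)$-RDP. This is exactly the point stressed in Section~\ref{subsection: advanced approach}: when the test fails, no gradient budget is spent, which is what the indicator $\mathbb{I}_i$ encodes.

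Next, the $T$ local iterations are run sequentially, each taking as input the state produced by the earlier ones, which is precisely the setting of adaptive sequential composition for approximate RDP. Applying that rule $T-1$ further times, the $\delta$-parameters add and the RDP functions $\epsilon(\alpha)$ add, so Algorithm~\ref{alg: dpsgd with private selection} is $T\delta_t$-approximate-$\big(\alpha,\ \sum_{i=1}^T(\mathbb{I}_i\cdot\epsilon_g^i(\alpha)+\epsilon_s^i(\alpha))\big)$-RDP with $\alpha$ still obeying the Lemma~\ref{lemma:dpsgd} constraint. To finish, I convert to $(\epsilon,\delta)$-DP using the standard RDP-to-DP conversion applied to the conditioned output distributions: a $\delta'$-approximate-$(\alpha,\epsilon(\alpha))$-RDP mechanism is $\big(\epsilon(\alpha)+\log(1/\delta)/(\alpha-1),\ \delta+\delta'\big)$-DP for every $\delta\in(0,1)$, where the two bad events (each of probability at most $\delta'$) are absorbed into the additive DP slack. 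Taking $\delta'=T\delta_t$ yields the claimed $\big(\sum_{i=1}^T(\mathbb{I}_i\cdot\epsilon_g^i(\alpha)+\epsilon_s^i(\alpha))+\log(1/\delta)/(\alpha-1),\ T\delta_t+\delta\big)$-DP.

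The main obstacle I expect is the careful bookkeeping around the data-dependent indicator $\mathbb{I}_i$: one must argue that conditionally running Line~16 only when the PTR test succeeds leaks nothing beyond what Theorem~\ref{lemma:private selection} already charges — i.e., that the null branch is legitimate post-processing of the selection output and folds into the adaptive composition at zero extra cost — and that the $T$ per-iteration bad events of probability at most $\delta_t$ can be union-bounded into the single $T\delta_t$ slack appearing in both the approximate-RDP and the final DP statement. A secondary point is confirming that one common $\alpha$ is admissible for every summand; since the selection bound of Theorem~\ref{lemma:private selection} holds for all $\alpha\ge1$, only the constraint inherited from Lemma~\ref{lemma:dpsgd} is binding, and it must be carried along as a hypothesis of the theorem (as stated).
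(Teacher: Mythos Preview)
Your proposal is correct and follows the same route the paper implicitly takes: the paper does not spell out a standalone proof of Theorem~\ref{theorem: composition} but instead states the adaptive composition lemma for approximate RDP (Lemma~\ref{lemma: approximated rdp composition}) and the approximate-RDP-to-DP conversion (Lemma~\ref{lemma:approximated rdp conversion}) in Appendix~\ref{appendix: RDP}, leaving the theorem as a direct consequence of applying these to the per-iteration guarantees from Lemma~\ref{lemma:dpsgd} and Theorem~\ref{lemma:private selection}. Your write-up simply makes this assembly explicit, including the bookkeeping around the indicator $\mathbb{I}_i$ (zero gradient cost when the PTR test fails), which matches the paper's remark in Section~\ref{subsection: advanced approach} that ``no extra privacy budget will be consumed by the private gradient step that will be discarded in this case.''
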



\section{Evaluation} \label{sec: evaluation}
In this section, we conduct experiments to answer three research questions:
{\bf RQ1}: Are the proposed attacks effective against FKGE?
{\bf RQ2}: Can \projectname mitigates these privacy threats?
{\bf RQ3}: How is the privacy-utility tradeoff?

\subsection{Experimental Setup}

\partitle{Datasets} We utilize two real-world knowledge graph benchmark datasets, FB15k-237 \cite{DBLP:conf/emnlp/ToutanovaCPPCG15} and NELL-995 \cite{DBLP:conf/emnlp/XiongHW17}. The details of these datasets are described in Appendix \ref{appendix: dataset}.

\partitle{Target FKGE Models}
We adopt four common KGE models: TransE \cite{DBLP:conf/nips/BordesUGWY13}, RotateE \cite{DBLP:conf/iclr/SunDNT19}, DisMult \cite{DBLP:journals/corr/YangYHGD14a} and ComplEx \cite{DBLP:conf/icml/TrouillonWRGB16}.
Their score functions are shown in Table \ref{table: score function}.
FKGE then collaboratively trains all four KGE models based on the following setting:
We randomly generate $1000$ triples, $500$ of them are included in the victim's training data as positive samples and the rest are left as negative samples.
The attack effectiveness is measured by whether the attacker can correctly distinguish the positive and negative samples.
The details of the hyper-parameters setting are shown in Appendix \ref{appendix: parameter}.


\partitle{Evaluation Metrics}
We use the harmonic mean of \textbf{\it Precision} and \textbf{\it Recall}, i.e., \textbf{\it F1-score}, to measure the attack success.
We report \textbf{\it Mean Reciprocal Rank (MRR)} and \textbf{\it Hits at N (Hits@N)} to evaluate the model performance, which follows the common practice in KGE literature.
Details about these metrics are shown in Appendix \ref{appendix: metric}.

\subsection{Attack Evaluation (RQ1)}
The overall performance of the proposed attacks is summarized in Table \ref{table: attack_f1score}.
The attacking rounds are set to every five rounds and we report the highest F1-score during the whole training procedure in the table.
CIP performs well on all four KGE algorithms with F1-Scrore around $0.8$.
CIA usually outperforms CIP on TransE and RotatE with F1-Scrore around $0.85$, while does not perform as well as CIP on DisMult and ComplEx.
The reason is that score functions of DistMult and ComplEx determine that triples are insensitive to reverse tail-entity embeddings.
SI only works on translational distance models, i.e., TransE and RotatE, because the relation embeddings in these models represent additive translations and can be well-clustered in the embedding space.
But the relation embeddings in bilinear models, i.e., DisMult and ComplEx, are multiplicative operations that can hardly be clustered.

In order to show how the performance of these attacks changes throughout the whole training procedure, we plot the F1-score of the attacks at different rounds in the left subfigure of Figure \ref{fig: f1&roc}.
We take the TransE and RotatE as examples because all three attack works on them.
We can see that in the early stage of the training, the F1-score of CIP and SI rises as the number of rounds increases.
In the late stages of the model training, the scores of these two attacks may no longer rise because the model gradually converges.
The F1-score of CIA peaks at the beginning of the training and decrease as the model converges since this attack focuses on the updates between two adjacent rounds. 
The F1-score varies with the thresholds, so we plot the ROC of the highest score of the corresponding attacks (on the left) in the right subfigure of Figure \ref{fig: f1&roc} to show the impact of thresholds.


\begin{figure}[htb]
	\centering
	\includegraphics[width=0.478\textwidth]{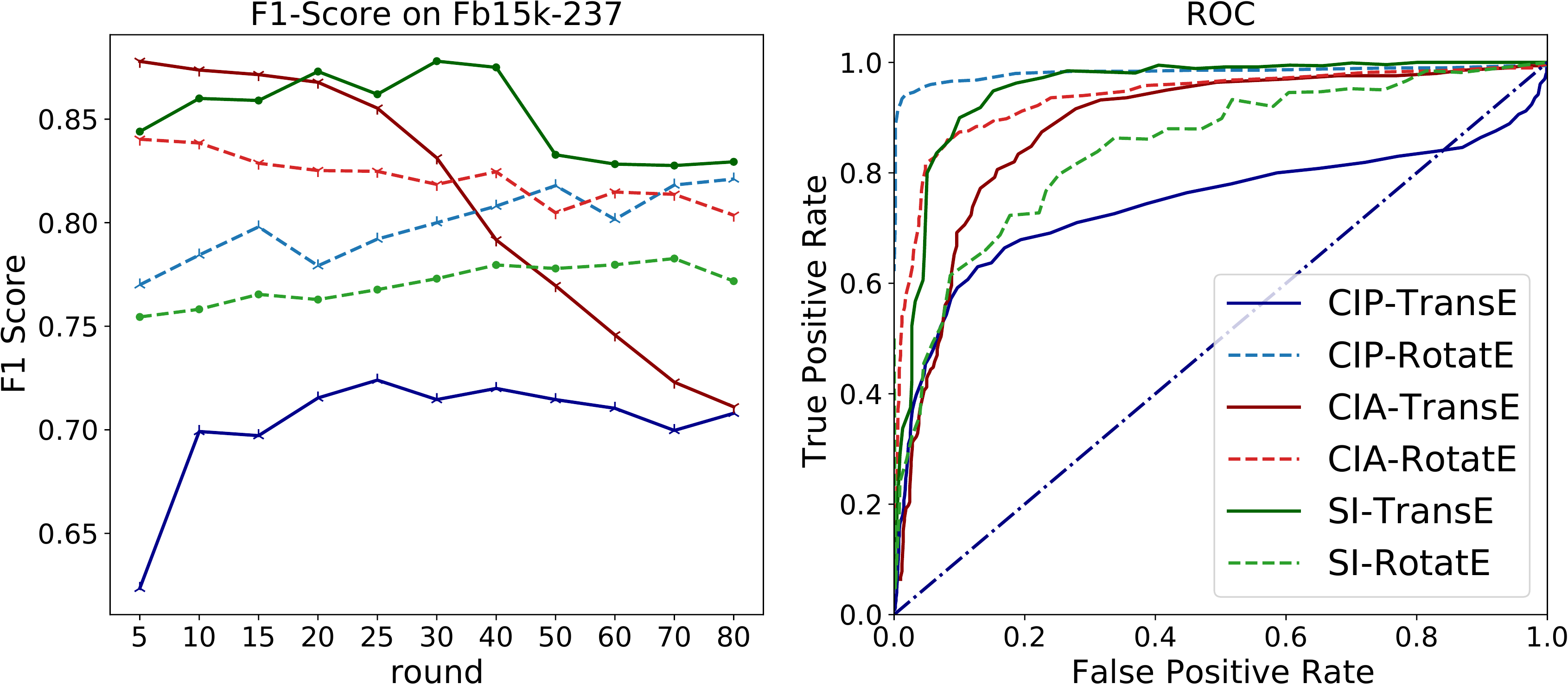}
	\vspace{-8mm}
	\caption{F1-Score of Attacks in Different Rounds and ROC.}
	\label{fig: f1&roc}
        \Description[The left figure is the F1-Score of attacks in different rounds and the right figure is the ROC of attacks.]{We can see that in the early stage of the training, the F1-score of CIP and SI rises as the number of rounds increases. In the late stages of the model training, the scores of these two attacks may no longer rise because the model gradually converges. The F1-score of CIA peaks at the beginning of the training and decrease as the model converges since the these attack focuses on the updates between two adjacent rounds. The F1-score varies with the thresholds, so we plot the ROC of the highest score of the corresponding attacks (on the left) in the right subfigure to show the impact of thresholds.}
		\vspace{-0.75em}
\end{figure}

\begin{figure*}[htb]
	\centering
	\includegraphics[width=1.0\textwidth]{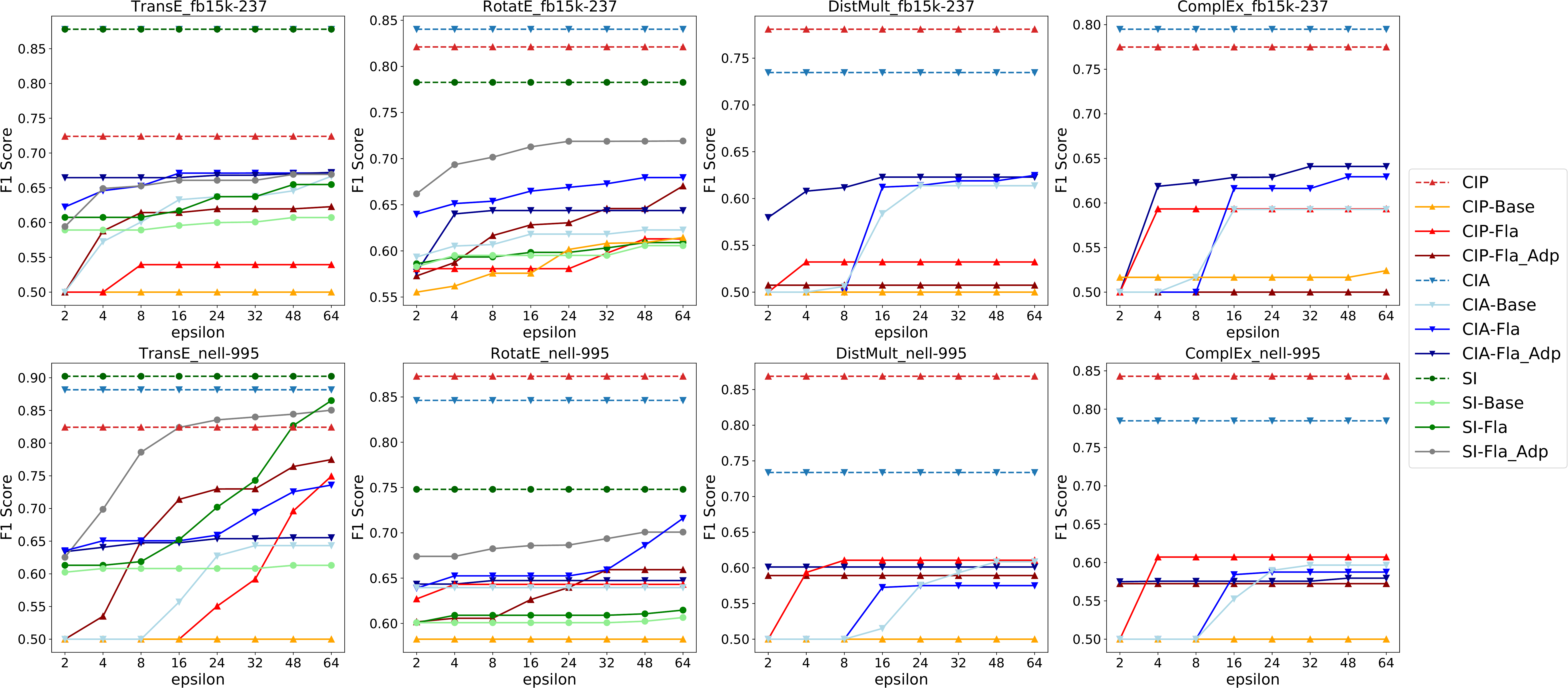}
	\vspace{-3ex}
	\setlength{\belowcaptionskip}{-2ex}
	\caption{F1-score of Attacks after Defense.}
        \Description[F1-score of three attack methods after defense.]{The baseline DPSGD for FKGE usually achieves the lowest F1-Score but at the price of intolerable utility decrease that will be shown later. On TransE and RotateE, DP-FLames-Adp  can better resist CIA than DP-FLames, while DP-FLames outperforms DP-FLames-Adp when resisting CIP and SI. On DistMult and ComplEx, DP-FLames-Adp and DP-FLames have comparable defense performance.}
	\label{fig: defense f1-score}
\end{figure*}

\begin{table*}[htb]
  \caption{\small Model Utility.}
  \footnotesize
  \vspace{-4mm}
  \begin{center}
   \setlength{\tabcolsep}{1.7mm}{
    \begin{tabular*}{\textwidth}{ @{\extracolsep{\fill}} lccccccccccccc}
     \toprule
     \multirow{2.5}{*}{Dataset} &\multirow{2.5}{*}{Setting} &\multicolumn{3}{c}{\textbf{TransE}}  & \multicolumn{3}{c}{\textbf{RotatE}} & \multicolumn{3}{c}{\textbf{DisMult}} & \multicolumn{3}{c}{\textbf{ComplEx}}\\
     \cmidrule(lr){3-5} \cmidrule(lr){6-8} \cmidrule(lr){9-11} \cmidrule(lr){12-14} 
     & & MRR & Hits@1 & Hits@10 & MRR & Hits@1 & Hits@10 & MRR & Hits@1 & Hits@10 & MRR & Hits@1 & Hits@10  \\
     \midrule
     \multirow{7}{*}{FB15K-237}
     &FKGE & 0.3606 & 0.2582 & 0.5720 & 0.3676 & 0.2671 & 0.5745 & 0.3138 & 0.2345 & 0.4718 & 0.3010 & 0.2277 & 0.4444 \\
     &Baseline ($\epsilon = 16$) & 0.1296 & 0.0492 & 0.2524 & 0.0506 & 0.0197 & 0.1180 & 0.0069 & 0.0060 & 0.0074 & 0.0067 & 0.0058 & 0.0068 \\
     &DP-Flame ($\epsilon = 16$) & 0.1843 & 0.0719 & 0.3949 & 0.0901 & 0.0477 & 0.1800 & 0.2342 & 0.1750 & 0.3533 & 0.2638 & 0.1999 & 0.3974 \\
     &DP-Flame-Adp ($\epsilon = 16$) & 0.3193 & 0.2233 & 0.5107 & 0.2946 & 0.2362 & 0.4548 & 0.2764 & 0.2074 & 0.4136 & 0.2874 & 0.2136 & 0.4285 \\
     \cmidrule(lr){2-14}
     &Baseline ($\epsilon = 32$) & 0.1248 & 0.0430 & 0.2575 & 0.0371 & 0.0138 & 0.0830 & 0.0084 & 0.0068 & 0.0098 & 0.0078 & 0.0066 & 0.0087  \\
     &DP-Flame ($\epsilon = 32$) & 0.2964 & 0.1962 & 0.4942 & 0.2740 & 0.1968 & 0.4228 & 0.2881 & 0.2137 & 0.4438 & 0.2784 & 0.2084 & 0.4226  \\
     &DP-Flame-Adp ($\epsilon = 32$) & 0.3239 & 0.2285 & 0.5143 & 0.3044 & 0.2241 & 0.4660 & 0.2728 & 0.2046 & 0.4072 & 0.2850 & 0.2114 & 0.4338  \\
     \midrule
     \multirow{7}{*}{NELL-955}
     &FKGE & 0.6866 & 0.6230 & 0.7976 & 0.7299 & 0.6824 & 0.7993 & 0.4120 & 0.3508 & 0.5183 & 0.3685 & 0.3072 & 0.4852 \\
     &Baseline ($\epsilon = 16$) & 0.1516 & 0.0035 & 0.3455 & 0.0265 & 0.0052 & 0.0576 & 0.0001 & 0 & 0 & 0.0001 & 0 & 0 \\
     &DP-Flame ($\epsilon = 16$) & 0.1419 & 0 & 0.3473 & 0.0151 & 0.0052 & 0.0209 & 0.1975 & 0.1396 & 0.2949 & 0.3340 & 0.3002 & 0.3962 \\
     &DP-Flame-Adp ($\epsilon = 16$) & 0.4687 & 0.2723 & 0.7330 & 0.3822 & 0.3438 & 0.4538 & 0.3745 & 0.2984 & 0.5166 & 0.3620 & 0.2932 & 0.4921 \\
     \cmidrule(lr){2-14}
     &Baseline ($\epsilon = 32$) & 0.1076 & 0 & 0.2618 & 0.0169 & 0.0017 & 0.0297 & 0.0002 & 0 & 0 & 0.0003 & 0 & 0  \\
     &DP-Flame ($\epsilon = 32$) & 0.3589 & 0.1658 & 0.6038 & 0.1099 & 0.0803 & 0.1623 & 0.3786 & 0.3054 & 0.4904 & 0.3747 & 0.2967 & 0.5061  \\
     &DP-Flame-Adp ($\epsilon = 32$) & 0.4814 & 0.2757 & 0.7365 & 0.5031 & 0.4555 & 0.6021 & 0.3912 & 0.3141 & 0.5201 & 0.3569 & 0.2740 & 0.4817  \\
     
     \bottomrule
   \end{tabular*}}
   \end{center}
   \label{tab: utility}
   \Description[Model Utility.]{The baseline DPSGD for FKGE has sharp utility drop on almost all FKGE models. DP-FLames shows higher utility than baseline, but still not good enough. DP-FLames-Adp achieves the best utility among the DP models, where under most of cases the utility even gets close to the non-DP models except the case of RotatE model on NELL-955.}
   \vspace{-1em}
\end{table*}

\subsection{Defense Evaluation (RQ2)}
We then study the effectiveness of \projectname in Algorithm \ref{alg: dpsgd with private selection} and show the results in Figure \ref{fig: defense f1-score}.
The solid lines denote the attack scores with defense at different $\epsilon$, while the dotted lines represent the scores without defense.
The DP methods with different $\epsilon$ start with the same initial noise magnitude $\sigma$, but end at different rounds when the $\epsilon$ runs out.
Our defense mechanisms can effectively defend against the attacks by significantly decreasing the F1-score of the attacks.
The baseline DPSGD for FKGE usually achieves the lowest F1-Score but at the price of intolerable utility decrease that will be shown later.
On TransE and RotateE, \projectnameadp can better resist CIA than \projectname, while \projectname outperforms \projectnameadp when resisting CIP and SI.
On DistMult and ComplEx, \projectnameadp and \projectname have comparable defense performance.

\begin{figure}[htb]
	\centering
	\includegraphics[width=0.478\textwidth]{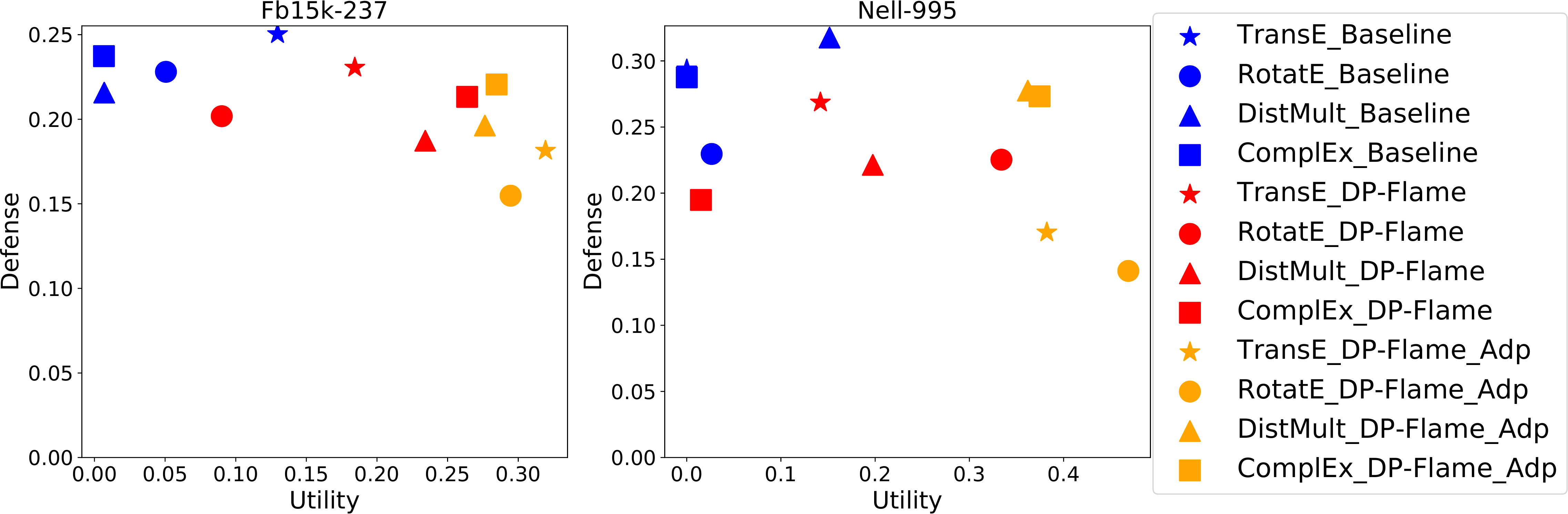}
	\vspace{-2em}
	\caption{Utility-Defense Trade-off.}
	\label{fig: utility&defense}
        \Description[The trade-off between utility and defense]{The defense capability is measured by the decrease in the averaged F1-Score of three attacks before and after defense, while the utility is measured with MRR. Both DP-FLames and DP-FLames-Adp obtains better utility-defense tradeoff.  DP-FLames-Adp always outperforms DP-FLames on utility, while DP-FLames shows better defense effect on TransE and RotatE.}
\end{figure}

\subsection{Model Utility Evaluation (RQ3)}
To study the adverse effect of DP defense on the model utility, we compare the DP models with non-DP models measured by  MRR, Hits@1, and Hits@10 as reported in Table \ref{tab: utility}.
Privacy budget are set to  $(16, 10^{-5})$ or $(32, 10^{-5})$, which have shown enough defense capability. 
The baseline DPSGD for FKGE has a sharp utility drop on almost all FKGE models.
\projectname shows higher utility than baseline, but still not good enough.
\projectnameadp achieves the best utility among the DP models, where in most cases the utility even gets close to the non-DP models except in the case of RotatE model on NELL-955.

In addition, we demonstrate the utility-defense tradeoff of the three defense mechanisms in Figure \ref{fig: utility&defense}.
The defense capability is measured by the decrease in the averaged F1-Score of three attacks before and after the defense, while the utility is measured with MRR.
Both \projectname and \projectnameadp obtain a better utility-defense tradeoff. 
\projectnameadp always outperforms \projectname on utility, while \projectname shows a better defense effect on TransE and RotatE.

\section{Conclusion} \label{sec: conclusion}
In this paper, we have comprehensively investigated the privacy risks of federated knowledge graph embedding (FKGE) from both attack and defense perspectives.
On the attack aspect, we have proposed three knowledge graph(KG) triple inference attacks on FKGE to expose its significant privacy vulnerability.
On the defense aspect, \projectname is proposed to provide rigorous differential privacy protection for FKGE, which exploits the sparse gradient property of FKGE by designing the private active gradient selection strategy.
An adaptive privacy budget allocation policy is further incorporated to dynamically adjust defense magnitude against the unbalanced privacy risks throughout the training procedure.
The experiment results demonstrate that the proposed defense can effectively defend against inference attacks with a modest utility decrease.

\clearpage
\balance
\begin{acks}
This work was supported by the National Key Research and Development Program of China (2021YFB3100300), National Natural Science Foundation of China (U20A20178 and 62072395), and CCF- AFSG Research Fund.
\end{acks}

\bibliographystyle{ACM-Reference-Format}
\bibliography{main}

\clearpage
\balance
\appendix
\section{Score Functions} \label{appendix: score function}

Score functions $f_r(h,t)$ of four typical KGE methods are described in Table \ref{table: score function}.
Note that ${\bf h}, {\bf r}, {\bf t}$ are embeddings of $h,r,t$, $\circ$ denotes Hadamard product and ${\rm Re}(\cdot)$ denotes the real vector component of a complex vector.

\begin{table}[h]
    \vspace{-2ex}
    \caption{Score functions $f_r(h,t)$ of typical KGE methods}
    \label{table: score function}
    \begin{tabular}{ccc}
      \toprule
      Model & Space & Score function $f_r(h,t)$  \\
      \midrule
      TransE & $h,r,t\in \mathbb{R}^d$ & $-\|{\bf h+r-t}\|$ \\
      RotatE & $h,r,t\in \mathbb{C}^d$ & $-\|{\bf h\circ r-t}\|$ \\
      DistMult & $h,r,t\in \mathbb{R}^d$ & ${\bf h}^\top{\rm diag}({\bf r}){\bf t}$ \\
      ComplEx & $h,r,t\in \mathbb{C}^d$ & ${\rm Re}({\bf h}^\top{\rm diag}({\bf r})\overline{{\bf t}})$ \\
      \bottomrule
    \end{tabular}
    \vspace{-2ex}
\end{table}

\section{Properties of RDP} \label{appendix: RDP}
One of the nice properties of RDP is the simplicity of composition.
\begin{lemma}\label{lemma: rdp composition}(\emph{Adaptive Composition of RDP})
If $\mathcal{M}_1(D)$ satisfies $(\alpha, \epsilon_1(\alpha))$-RDP, and $\mathcal{M}_2(D, \mathcal{M}_1(D))$ satisfies $(\alpha, \epsilon_2(\alpha))$-RDP, then the composition of $\mathcal{M}_1$ and $\mathcal{M}_2$ satisfies $(\alpha, \epsilon_1(\alpha)+\epsilon_2(\alpha))$-RDP.
\end{lemma}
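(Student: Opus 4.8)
The plan is to establish Lemma~\ref{lemma: rdp composition} by a direct calculation on the exponentiated R\'enyi divergence, exploiting the fact that the composed mechanism's output is a pair $(y_1,y_2)$ whose joint law factorizes. Fix any pair of neighboring databases $D,D'$ and any order $\alpha>1$. I would write the composed mechanism as $\mathcal{M}(D)=(\mathcal{M}_1(D),\mathcal{M}_2(D,\mathcal{M}_1(D)))$, and let $P_1,Q_1$ denote the densities of $\mathcal{M}_1(D),\mathcal{M}_1(D')$ and $P_2(\cdot\mid y_1),Q_2(\cdot\mid y_1)$ the densities of $\mathcal{M}_2(D,y_1),\mathcal{M}_2(D',y_1)$ for each fixed auxiliary input $y_1$. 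The joint density of $\mathcal{M}(D)$ at $(y_1,y_2)$ is then $P_1(y_1)P_2(y_2\mid y_1)$, and similarly $Q_1(y_1)Q_2(y_2\mid y_1)$ for $D'$.

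First I would rewrite the moment $e^{(\alpha-1)\mathbb{D}_\alpha(\mathcal{M}(D)\|\mathcal{M}(D'))}$ directly from Definition~\ref{def: approximated renyi differential privacy} as a double integral over $(y_1,y_2)$ with respect to $\mathcal{M}(D)$, and use the factorization to split the likelihood ratio into a product of the $\mathcal{M}_1$-ratio $P_1(y_1)/Q_1(y_1)$ and the conditional $\mathcal{M}_2$-ratio $P_2(y_2\mid y_1)/Q_2(y_2\mid y_1)$. Because, once $y_1$ is held fixed, the two ratios depend on disjoint integration variables, the integral becomes iterated: the inner integral over $y_2$ is precisely $e^{(\alpha-1)\mathbb{D}_\alpha(\mathcal{M}_2(D,y_1)\|\mathcal{M}_2(D',y_1))}$, the exponentiated R\'enyi divergence of the second mechanism at auxiliary input $y_1$.

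The crucial step is to bound this inner term. By hypothesis $\mathcal{M}_2(D,\cdot)$ satisfies $(\alpha,\epsilon_2(\alpha))$-RDP, which must be read as holding \emph{uniformly over every possible value $y_1$ of the auxiliary input}; hence the inner conditional divergence is at most $\epsilon_2(\alpha)$ and its exponentiated form is bounded by $e^{(\alpha-1)\epsilon_2(\alpha)}$ for each $y_1$. I would factor this uniform bound out of the outer integral, leaving exactly $e^{(\alpha-1)\mathbb{D}_\alpha(\mathcal{M}_1(D)\|\mathcal{M}_1(D'))}$, which the first hypothesis bounds by $e^{(\alpha-1)\epsilon_1(\alpha)}$. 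Multiplying the two bounds gives $e^{(\alpha-1)\mathbb{D}_\alpha(\mathcal{M}(D)\|\mathcal{M}(D'))}\le e^{(\alpha-1)(\epsilon_1(\alpha)+\epsilon_2(\alpha))}$, and applying $\tfrac{1}{\alpha-1}\log(\cdot)$ to both sides (legitimate since $\alpha>1$ keeps the monotone direction) yields $\mathbb{D}_\alpha(\mathcal{M}(D)\|\mathcal{M}(D'))\le \epsilon_1(\alpha)+\epsilon_2(\alpha)$. As $D,D'$ were arbitrary neighbors, the composition satisfies $(\alpha,\epsilon_1(\alpha)+\epsilon_2(\alpha))$-RDP.

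I expect the only genuine obstacle to be the uniformity point flagged above: the adaptivity is valid precisely because $\mathcal{M}_2$'s RDP guarantee is assumed to hold for \emph{all} auxiliary inputs $y_1$, so no averaging over $y_1$ is needed before bounding the inner term, which is what lets the $\epsilon_2(\alpha)$ factor pull cleanly through the outer integral. A minor bookkeeping point is handling the limiting case $\alpha=1$ and the null sets where $Q_1(y_1)=0$ or $Q_2(y_2\mid y_1)=0$, but these are standard and dispatched by the usual absolute-continuity conventions for R\'enyi divergence; no property beyond the definition of $\mathbb{D}_\alpha$ is required.
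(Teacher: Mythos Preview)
Your argument is correct and is exactly the standard proof of adaptive RDP composition due to Mironov. Note, however, that the paper does not supply its own proof of this lemma: it is stated without proof in the appendix as a known property of RDP, so there is no paper-side argument to compare against beyond observing that your derivation matches the canonical one.
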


RDP can be easily converted to DP.
\begin{lemma} \label{lemma: rdp conversion}(\emph{RDP Conversion to DP})
If a randomized algorithm $\mathcal{M}$ satisfies $(\alpha, \epsilon(\alpha))$-RDP, then $\mathcal{M}$ also satisfies $(\epsilon(\alpha)+\frac{\log(1/\delta)}{\alpha-1}, \delta)$-DP for any $\delta \in (0,1)$.
\end{lemma}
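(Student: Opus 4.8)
The plan is to prove the conversion through a tail bound on the privacy-loss random variable, the standard Markov-inequality argument underlying Mironov's RDP-to-DP conversion. First I would fix an arbitrary pair of neighboring databases $D, D'$ and an arbitrary outcome set $\mathcal{O} \in Range(\mathcal{M})$, abbreviating the two output distributions as $P := \mathcal{M}(D)$ and $Q := \mathcal{M}(D')$ with mass (or density) functions $P(o), Q(o)$. By hypothesis the RDP bound $\mathbb{D}_\alpha(P\|Q) \le \epsilon(\alpha)$ holds. Setting the target privacy level $\epsilon' := \epsilon(\alpha) + \frac{\log(1/\delta)}{\alpha-1}$, the goal reduces to establishing $\Pr[P \in \mathcal{O}] \le e^{\epsilon'}\Pr[Q\in\mathcal{O}] + \delta$.

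The key decomposition is to isolate the region where the likelihood ratio exceeds the target multiplicative factor, namely the ``bad'' set $S := \{o : P(o) > e^{\epsilon'}Q(o)\}$. Splitting $\mathcal{O}$ across $S$ and its complement, and using that $P(o) \le e^{\epsilon'}Q(o)$ on $S^c$, gives at once
\begin{equation*}
\Pr[P\in\mathcal{O}] \le \Pr[P \in S] + e^{\epsilon'}\Pr[Q\in\mathcal{O}].
\end{equation*}
Thus it suffices to prove the single tail bound $\Pr[P\in S] \le \delta$, which is where the additive slack $\delta$ in the DP guarantee will be spent.

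To bound $\Pr[P\in S]$, I would read $S$ as a tail event for the privacy-loss variable $Z := \log\big(P(o)/Q(o)\big)$ with $o\sim P$, so that $S = \{Z > \epsilon'\}$. Exponentiating at scale $\alpha - 1 > 0$ and applying Markov's inequality yields
\begin{equation*}
\Pr[P\in S] = \Pr_{o\sim P}\!\big[e^{(\alpha-1)Z} > e^{(\alpha-1)\epsilon'}\big] \le e^{-(\alpha-1)\epsilon'}\,\mathbb{E}_{o\sim P}\Big[\big(\tfrac{P(o)}{Q(o)}\big)^{\alpha-1}\Big].
\end{equation*}
The remaining moment equals $e^{(\alpha-1)\mathbb{D}_\alpha(P\|Q)}$ by the definition of the Rényi divergence $\mathbb{D}_\alpha$, so the RDP hypothesis gives $\Pr[P\in S] \le e^{(\alpha-1)(\epsilon(\alpha)-\epsilon')}$. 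Substituting the chosen $\epsilon'$ collapses the exponent to exactly $\log\delta$, hence $\Pr[P\in S]\le\delta$. Combining with the decomposition proves the $(\epsilon',\delta)$ inequality for the ordered pair $(D,D')$; the reverse inequality (with $D, D'$ interchanged) follows by the identical argument, since the neighboring relation is symmetric and the RDP guarantee is therefore assumed to hold in both directions. Together these give full $(\epsilon',\delta)$-DP.

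The main obstacle is conceptual rather than computational: recognizing that the free exponent $\alpha-1$ in the Markov step must be matched to the order $\alpha$ of the Rényi divergence, so that the exponentiated privacy loss is precisely the moment controlled by the RDP bound; everything else is bookkeeping that pins down the constant $\frac{\log(1/\delta)}{\alpha-1}$. A minor technical point is the passage from the discrete formulation to continuous outputs, but the decomposition and the Markov argument transfer verbatim with densities in place of probability masses, so no separate treatment is needed.
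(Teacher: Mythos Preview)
Your proof is correct and is precisely the standard Markov-inequality argument from Mironov's original RDP paper. The paper itself does not prove this lemma: it is merely stated in the appendix on RDP properties as a known background result, implicitly deferring to the cited reference, so there is no in-paper proof to compare against; the argument you give is the canonical one.
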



The approximated RDP also has the composition property.
\begin{lemma}\label{lemma: approximated rdp composition}(\emph{Adaptive Composition of approximated RDP})
If $\mathcal{M}_1(D)$ satisfies $\delta_1$-approximate-$(\alpha, \epsilon_1(\alpha))$-RDP, $\mathcal{M}_2(D, \mathcal{M}_1(D))$ satisfies $\delta_2$-approximate-$(\alpha, \epsilon_2(\alpha))$-RDP, then the composition of $\mathcal{M}_1$ and $\mathcal{M}_2$ satisfies $(\delta_1+\delta_2)$-approximate-$(\alpha, \epsilon_1(\alpha)+\epsilon_2(\alpha))$-RDP.
\end{lemma}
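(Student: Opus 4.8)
The plan is to reduce the claim to the exact RDP composition (Lemma~\ref{lemma: rdp composition}) by carefully bookkeeping the two pairs of ``good'' events supplied by Definition~\ref{def: approximated renyi differential privacy}. Fix an arbitrary pair of neighboring databases $D$ and $D'$. First I would invoke the $\delta_1$-approximate-$(\alpha,\epsilon_1(\alpha))$-RDP of $\mathcal{M}_1$ to obtain events $E_1$ (on the range of $\mathcal{M}_1(D)$) and $E_1'$ (on the range of $\mathcal{M}_1(D')$) with $\Pr[E_1] > 1-\delta_1$, $\Pr[E_1'] > 1-\delta_1$, and $\mathbb{D}_\alpha(\mathcal{M}_1(D)\,|\,E_1 \,\|\, \mathcal{M}_1(D')\,|\,E_1') \le \epsilon_1(\alpha)$.

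Next, because $\mathcal{M}_2$ is adaptive, for every fixed first-round output $o_1$ the mechanism $\mathcal{M}_2(\cdot, o_1)$ is itself $\delta_2$-approximate-$(\alpha,\epsilon_2(\alpha))$-RDP, so it yields events $E_2^{o_1}$ and $E_2'^{o_1}$ on the range of $\mathcal{M}_2(D,o_1)$ and $\mathcal{M}_2(D',o_1)$ respectively, with conditional probability exceeding $1-\delta_2$ and conditional divergence at most $\epsilon_2(\alpha)$. I would then lift these to the joint output space by setting $E_2 = \{(o_1,o_2): o_2 \in E_2^{o_1}\}$ and $E_2' = \{(o_1,o_2): o_2 \in E_2'^{o_1}\}$, and define the composite good events $E = E_1 \cap E_2$ and $E' = E_1' \cap E_2'$, where $E_1$ is read as the cylinder set $\{(o_1,o_2): o_1 \in E_1\}$.

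The probability guarantee follows from a union bound: marginalizing the conditional bound $\Pr[E_2^{o_1}] > 1-\delta_2$ over $o_1$ gives $\Pr[E_2] > 1-\delta_2$, hence $\Pr[E] \ge 1 - \Pr[E_1^c] - \Pr[E_2^c] > 1 - (\delta_1+\delta_2)$, and symmetrically for $E'$. The divergence guarantee is where the adaptive structure must be handled: I would argue that conditioning the joint law $(\mathcal{M}_1,\mathcal{M}_2)(D)$ on $E = E_1 \cap E_2$ factorizes so that its first marginal equals $\mathcal{M}_1(D)\,|\,E_1$ and its conditional second coordinate given $o_1$ equals $\mathcal{M}_2(D,o_1)\,|\,E_2^{o_1}$; this factorization holds precisely because $E_1$ is measurable with respect to the first coordinate alone while $E_2$ is defined slice-wise in $o_1$. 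With this in hand, the two event-conditioned joint laws are exactly the adaptive composition of a pair of ordinary ($\delta=0$) $(\alpha,\epsilon_1(\alpha))$- and $(\alpha,\epsilon_2(\alpha))$-RDP mechanisms, so Lemma~\ref{lemma: rdp composition} applied to the conditioned laws yields $\mathbb{D}_\alpha((\mathcal{M}_1,\mathcal{M}_2)(D)\,|\,E \,\|\, (\mathcal{M}_1,\mathcal{M}_2)(D')\,|\,E') \le \epsilon_1(\alpha)+\epsilon_2(\alpha)$. Combining the probability and divergence bounds, the events $E, E'$ witness $(\delta_1+\delta_2)$-approximate-$(\alpha,\epsilon_1(\alpha)+\epsilon_2(\alpha))$-RDP for the composition.

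The main obstacle I expect is rigorously justifying the factorization of the event-conditioned joint distribution and confirming that the normalizing constants introduced by conditioning on $E_1$ and on the $o_1$-dependent slices $E_2^{o_1}$ recombine correctly, i.e., that the conditional-on-$E$ law really is the adaptive composition of the two separately conditioned laws rather than some reweighted version. I would verify this by writing the density of $(\mathcal{M}_1,\mathcal{M}_2)(D)\,|\,E$ explicitly as a product of the density of $\mathcal{M}_1(D)\,|\,E_1$ and the slice-conditioned density of $\mathcal{M}_2(D,o_1)\,|\,E_2^{o_1}$, checking that the $o_1$-dependent normalizers cancel against the restriction of the first marginal to $E_1$. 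Once this bookkeeping is confirmed, the chain-rule inequality underlying Lemma~\ref{lemma: rdp composition} applies verbatim and the remainder is the routine union bound above.
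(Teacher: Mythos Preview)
The paper does not prove this lemma; it is simply recorded as a known property of approximate RDP taken from \cite{DBLP:conf/aistats/0005W22}, so there is no in-paper argument to compare against.

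Your overall strategy is natural, but the factorization step you correctly single out as the main obstacle does not go through, and the bookkeeping will \emph{not} confirm the cancellation you expect. Write $f_1$ for the density of $\mathcal{M}_1(D)$ and $f_2(\cdot\mid o_1)$ for that of $\mathcal{M}_2(D,o_1)$. The density of $(\mathcal{M}_1,\mathcal{M}_2)(D)\mid E$ at $(o_1,o_2)$ is $f_1(o_1)\,f_2(o_2\mid o_1)\,\mathbf{1}_E(o_1,o_2)/\Pr[E]$, whereas the product of the two separately conditioned densities is
\[
\frac{f_1(o_1)\,\mathbf{1}_{E_1}(o_1)}{\Pr[E_1]}\cdot\frac{f_2(o_2\mid o_1)\,\mathbf{1}_{E_2^{o_1}}(o_2)}{\Pr[E_2^{o_1}]}.
\]
These agree only when $\Pr[E]=\Pr[E_1]\cdot\Pr[E_2^{o_1}]$ for \emph{every} $o_1$, i.e.\ only when $\Pr[E_2^{o_1}]$ is constant in $o_1$; nothing in the hypotheses forces this. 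Equivalently, the first marginal of the conditioned joint is not $\mathcal{M}_1(D)\mid E_1$ but that distribution \emph{reweighted by} the $o_1$-dependent factor $\Pr[E_2^{o_1}]$, so you cannot simply invoke Lemma~\ref{lemma: rdp composition} on the conditioned laws. If you push the chain-rule computation through anyway, the reweighting leaves residual ratios bounded by powers of $(1-\delta_2)^{-1}$, which translates into an additive $O\!\bigl(\tfrac{\alpha}{\alpha-1}\,\delta_2\bigr)$ slack in the R\'enyi bound rather than the clean $\epsilon_1(\alpha)+\epsilon_2(\alpha)$. Obtaining the statement exactly as written requires either a different construction of the witnessing events or the argument in the original reference, which does not proceed via the na\"ive intersection $E_1\cap E_2$.
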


The approximated RDP can also be converted to DP.
\begin{lemma} \label{lemma:approximated rdp conversion}(\emph{Approximated RDP Conversion to DP})
If a randomized algorithm $\mathcal{M}$ satisfies $\delta_1$-$(\alpha, \epsilon(\alpha))$-RDP, then $\mathcal{M}$ also satisfies $(\epsilon(\alpha)+\frac{\log(1/\delta)}{\alpha-1}, \delta+\delta_1)$-DP for any $\delta \in (0,1)$.
\end{lemma}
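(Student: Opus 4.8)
The plan is to build the $T$-iteration guarantee compositionally, first bounding the privacy loss of a single iteration by combining its two constituent mechanisms, and then invoking the adaptive composition and conversion lemmas for (approximate) RDP. Within iteration $i$, the update is the sequential composition of two data-dependent steps: the private selection routine (Algorithm~\ref{alg: private selection}, analyzed in Theorem~\ref{lemma:private selection}) followed by the noisy release of the selected active gradient rows (Lines 11, 13, 16 of Algorithm~\ref{alg: dpsgd with private selection}, analyzed in Lemma~\ref{lemma:dpsgd}). By Theorem~\ref{lemma:private selection}, the selection step satisfies $\delta_t$-approximate-$(\alpha,\epsilon_s^i(\alpha))$-RDP for every $\alpha\ge 1$; by Lemma~\ref{lemma:dpsgd}, the gradient step satisfies $(\alpha,\epsilon_g^i(\alpha))$-RDP, which I treat as $0$-approximate-$(\alpha,\epsilon_g^i(\alpha))$-RDP (pure RDP being the $\delta=0$ case noted right after Definition~\ref{def: approximated renyi differential privacy}), valid for $\alpha$ in the admissible range of Lemma~\ref{lemma:dpsgd}.

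The key accounting subtlety, and the reason the indicator $\mathbb{I}_i$ multiplies $\epsilon_g^i(\alpha)$ but not $\epsilon_s^i(\alpha)$, is that the selection step always runs and hence always costs $\epsilon_s^i$, whereas the noisy gradient release is only executed when the propose-test-release ``test'' on Line 5 of Algorithm~\ref{alg: private selection} succeeds. When the test fails the selection outputs $\perp$, the candidate gradient release is discarded, and no query is ever made to the sensitive gradient, so that iteration's gradient contribution to the R\'enyi divergence is exactly zero. I would make this precise by conditioning on the PTR event: on the high-probability event underlying the approximate-RDP guarantee of the selection step, the test outcome is determined, and the composed mechanism either pays $\epsilon_s^i+\epsilon_g^i$ (test passed, $\mathbb{I}_i=1$) or only $\epsilon_s^i$ (test failed, $\mathbb{I}_i=0$). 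Applying the adaptive composition of approximate RDP (Lemma~\ref{lemma: approximated rdp composition}) to the two steps then yields that iteration $i$ satisfies $\delta_t$-approximate-$(\alpha,\mathbb{I}_i\cdot\epsilon_g^i(\alpha)+\epsilon_s^i(\alpha))$-RDP.

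With the single-iteration bound in hand, I would compose across the $T$ iterations. Because each iteration's input is the embedding produced by the previous iteration, and may depend on earlier test outcomes, the composition must be adaptive; Lemma~\ref{lemma: approximated rdp composition} applies directly and additively, so both the $\epsilon$-terms and the $\delta$-slacks sum. This gives the first claim: the full algorithm satisfies $T\cdot\delta_t$-approximate-$(\alpha,\sum_{i=1}^T(\mathbb{I}_i\cdot\epsilon_g^i(\alpha)+\epsilon_s^i(\alpha)))$-RDP, for any fixed $\alpha$ meeting the constraint of Lemma~\ref{lemma:dpsgd}. Converting this approximate-RDP statement to $(\epsilon,\delta)$-DP via Lemma~\ref{lemma:approximated rdp conversion}, with approximate-RDP slack $\delta_1=T\cdot\delta_t$ and an arbitrary $\delta\in(0,1)$, produces exactly the stated $(\sum_{i=1}^T(\mathbb{I}_i\cdot\epsilon_g^i(\alpha)+\epsilon_s^i(\alpha))+\frac{\log(1/\delta)}{\alpha-1},\,T\cdot\delta_t+\delta)$-DP guarantee.

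The main obstacle I expect is the rigorous justification of the data-dependent indicator $\mathbb{I}_i$ within the adaptive-composition framework: one must verify that discarding the gradient release on test failure genuinely incurs no R\'enyi divergence, and that this conditional, realization-dependent accounting is consistent with treating $\mathbb{I}_i$ as fixed when summing the per-iteration budgets. This requires arguing that, conditioned on the PTR test outcome (an event captured by the approximate-RDP slack of the selection step), the neighboring-dataset output distributions of the discarded branch coincide, so the branch contributes no privacy loss, and that the dependence of later rounds on earlier test outcomes does not break Lemma~\ref{lemma: approximated rdp composition}. The remaining steps, namely summing the RDP budgets and the final conversion, are mechanical applications of the stated lemmas.
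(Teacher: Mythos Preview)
Your proposal does not address the stated lemma at all. Lemma~\ref{lemma:approximated rdp conversion} is the generic conversion from $\delta_1$-approximate-$(\alpha,\epsilon(\alpha))$-RDP to $(\epsilon(\alpha)+\frac{\log(1/\delta)}{\alpha-1},\delta+\delta_1)$-DP; it is a property of an \emph{arbitrary} randomized mechanism and makes no reference to Algorithm~\ref{alg: dpsgd with private selection}, iterations, private selection, PTR indicators $\mathbb{I}_i$, or composition over $T$ rounds. What you have written is instead a proof sketch for Theorem~\ref{theorem: composition}, and in fact you explicitly \emph{invoke} Lemma~\ref{lemma:approximated rdp conversion} as a black box in your final step (``Converting this approximate-RDP statement to $(\epsilon,\delta)$-DP via Lemma~\ref{lemma:approximated rdp conversion}\ldots''). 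So your argument assumes the very result you were asked to establish.

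For comparison with the paper: the paper does not prove Lemma~\ref{lemma:approximated rdp conversion} either; it is listed in Appendix~\ref{appendix: RDP} alongside Lemmas~\ref{lemma: rdp composition}, \ref{lemma: rdp conversion}, and \ref{lemma: approximated rdp composition} as a standard property of (approximate) RDP inherited from prior work on the approximate-RDP framework (Definition~\ref{def: approximated renyi differential privacy}). A direct proof would proceed by taking the events $E,E'$ with $\Pr[E],\Pr[E']\ge 1-\delta_1$ guaranteed by Definition~\ref{def: approximated renyi differential privacy}, applying the standard RDP-to-DP conversion (Lemma~\ref{lemma: rdp conversion}) to the conditional distributions $\mathcal{M}(D)\mid E$ and $\mathcal{M}(D')\mid E'$ to obtain the $\epsilon(\alpha)+\frac{\log(1/\delta)}{\alpha-1}$ term with failure probability $\delta$, and then absorbing the at-most-$\delta_1$ mass outside $E$ (respectively $E'$) into the additive slack, giving total failure $\delta+\delta_1$. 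None of the algorithm-specific reasoning in your write-up is relevant here.
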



\section{Related Work} \label{Appendix: related work}
\partitle{Membership Inference Attack}
Membership inference attack (MIA) aims to infer whether a data record has been used to train a machine learning model or not.
After Shokri et al.'s\cite{DBLP:conf/sp/ShokriSSS17} first MIA, extensive works have achieved successful attacks against various data domains, e.g., on image datasets \cite{DBLP:conf/nips/ZhuLH19, DBLP:conf/nips/GeipingBD020}, language datasets \cite{DBLP:journals/popets/HayesMDC19, DBLP:conf/ccs/SongR20}, or graph-structured datasets \cite{DBLP:conf/uss/HeJ0G021, DBLP:conf/sp/0011L0022, DBLP:journals/corr/abs-2110-02631}.
In particular, Nasr et al. \cite{DBLP:conf/sp/NasrSH19} and Hayes et al. \cite{DBLP:journals/popets/HayesMDC19} propose MIAs on federated learning models by obtaining gradients updates.
MIAs against centralized graph neural networks \cite{DBLP:conf/uss/HeJ0G021, DBLP:conf/sp/0011L0022} or knowledge graphs \cite{DBLP:journals/corr/abs-2104-08273} reveals that graph-structured data is also at risk of privacy leakage.

 
\vspace{-0.5em}

\partitle{Federated KGE and Privacy Protection}
FedE \cite{DBLP:conf/jist/ChenZYJC21} is proposed as the first federated KGE framework, which aggregates locally computed updates of entity embeddings to train a global model. 
FedR \cite{DBLP:journals/corr/abs-2203-09553} aggregate relation embeddings instead of entity embeddings.
Although the training data is preserved locally in these federated frameworks, the transferred gradient updates may potentially encode fine details of at least some of the training data \cite{DBLP:conf/ccs/AbadiCGMMT016, DBLP:journals/fgcs/MothukuriPPHDS21} and can be leveraged to compromise the privacy of the training data \cite{DBLP:conf/uss/CarliniTWJHLRBS21, DBLP:journals/tsc/TruexLGYW21, DBLP:conf/sp/NasrSH19}.
DP has been investigated for KGE privacy protection.
Han et al. \cite{DBLP:journals/ws/HanDGCB22} propose a DPSGD-based DP KGE algorithm for a single KG in the centralized learning. Peng et al. \cite{DBLP:conf/cikm/PengLSZ021} propose a PATE-GAN-based DP KGE algorithm for multiple KGs in the decentralized learning with the off-the-shelf PATE-GAN technique \cite{DBLP:conf/iclr/JordonYS19a}. However, both works are not rigorous enough in DP analysis, which makes their DP guarantees hold only for part of the algorithms as detailed in Appendix \ref{appendix: inaccurate DP}.

\section{Notation Table} \label{appendix: notation table}
\begin{table}[htbp] \small
    \caption{Important Notations.}
    \vspace{-2ex}
    \label{table: notation}
    \begin{tabular}{cc}
      \toprule
      Variable & Description\\
      \midrule
      $\mathcal{G}, \mathcal{E}, \mathcal{R}, \mathcal{T}$ & KG and the sets of entities, relations and triples\\
      $h,r,t$ & head entity, tail entity and relation\\
      ${\mE}, {\mR}$ & representing embedding matrix of entities and relations\\
      $\vh, \vt, \vr$ & representing embedding vectors of $h,r,t$\\
      $\mathcal{L}(\vh,\vr,\vt)$ & loss function of an embedding model\\
      $f_r(\vh,\vt)$ & score function that measure the plausibility of $(\vh,\vr,\vt)$\\
      $B, N, \lambda$ & batch size, number of triples in training set, learning rate\\
      $q$ & sampling probability\\
      $\mG$ & gradient matrix\\
      $C_1, C_2$ & clipping thresholds in DPSGD and private selection\\
      $\sigma, \eta, \Delta$ & noise allocation parameters\\
      $\epsilon_r$, $\sigma_p$, $\delta_t$ &private selection parameters \\
      \bottomrule
    \end{tabular}
    \vspace{-2em}
\end{table}



\section{Details of Adaptive Privacy Budget Allocation}\label{appendix: adaptive allocation}
In our experiment of active attack, the success rate peaks in the first few rounds of training and drops as the model converges. 
This motivates us to provide greater defense in the first few rounds and gradually lower the defense in the later rounds for better utility.

We propose to adaptively adjust the noise scale $\sigma$ based on the validation accuracy during the training procedure (Line 23-25 in Algorithm \ref{alg: dpsgd with private selection}).
Specifically, we set the initial $\sigma$ relatively large to defend against the client-initiate active inference attack and calculate the validation accuracy every few rounds.
A large $\sigma$ may prevent the model from converging as the learning approaches the local optimum, which is reflected in the fluctuated verification accuracy.
If the verification accuracy no longer increases, we decrease $\sigma$ by multiplying it with a coefficient $\eta<1$: 
\begin{small}
\begin{equation} \nonumber
    \sigma = 
    \begin{cases}
    \eta \cdot \sigma, &{\rm if}\ {\rm MRR}_t - {\rm MRR}_{t-1} < \Delta, \\
    \sigma, &{\rm if}\ {\rm MRR}_t - {\rm MRR}_{t-1} \ge \Delta,
    \end{cases}
\end{equation}
\end{small}
where ${\rm MRR}$ is a metric for knowledge graph completion and $\Delta$ is a threshold.
The validation can be performed on a public dataset, otherwise, the non-private validation may rise privacy concerns.
If no public dataset is available, we utilize predefined noise decay to periodically reduce the noise in an empirical manner.

\section{Theorem Proof} \label{appendix: proof}
\begin{theorem}
     The private selection in Algorithm \ref{alg: private selection} with sampling ratio $q$ satisfies $\delta_t$-approximated-$(\alpha, \epsilon_s(\alpha))$-RDP, where
    \begin{small}
    \begin{align}
        \nonumber &\epsilon_s(\alpha)=\frac{1}{\alpha-1}\log(p^2\tbinom{\alpha}{2}\min\{4(e^{\epsilon(2)}-1), e^{\epsilon(2)}\min\{2, (e^{\epsilon(\infty)}-1)^2\}\}\\
        \nonumber &+\sum_{j=3}^\alpha q^j\tbinom{\alpha}{j}e^{(j-1)\epsilon(j)}\min\{2,(e^{\epsilon(\infty)}-1)^j\}+1)
        \text{and } \epsilon(\alpha)=\frac{\alpha}{8\sigma_r^2}+\frac{\alpha}{2\sigma_p^2}.
    \end{align}
    \end{small}
\end{theorem}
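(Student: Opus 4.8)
\textbf{Proof proposal for Theorem~\ref{lemma:private selection}.}

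The plan is to analyze the private selection mechanism as a composition of two Gaussian-based subroutines under the approximate-RDP framework of \cite{DBLP:conf/aistats/0005W22}, and then lift the resulting per-instance guarantee to a subsampled guarantee via the amplification-by-subsampling bound for RDP. First I would isolate the per-instance (unamplified) privacy cost: the Report-Noisy-Max-with-Gumbel step in Line~2 of Algorithm~\ref{alg: private selection} releases the index of the maximal regularized $\ell_2$-norm gap, and since each entity's clipped row has $\ell_2$-sensitivity bounded by $C_2$ (the clipping in Line~14 of Algorithm~\ref{alg: dpsgd with private selection}), a single triple change perturbs each gap by at most $2C_2$; the Gumbel noise with scale $2C_2\sigma_r$ then gives an RDP contribution of order $\alpha/(8\sigma_r^2)$ by the standard exponential-mechanism/Gumbel RDP analysis. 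The Propose-Test-Release step in Lines~3--4 adds Gaussian noise $\mathcal{N}(0,\sigma_p^2C_2^2)$ to $\max\{C_2,d_k\}$, whose sensitivity is again $C_2$, contributing $\alpha/(2\sigma_p^2)$; the subtracted slack term $\sigma_p C_2\sqrt{2\log(1/\delta_t)}$ is exactly what makes the test ``safe'' except with probability $\delta_t$, which is the source of the $\delta_t$-approximate relaxation. Composing these two additive RDP curves (Lemma~\ref{lemma: rdp composition}) yields the per-instance curve $\epsilon(\alpha)=\frac{\alpha}{8\sigma_r^2}+\frac{\alpha}{2\sigma_p^2}$, and the failure event of the PTR test is absorbed into the $\delta_t$ slack so the whole selection is $\delta_t$-approximate-$(\alpha,\epsilon(\alpha))$-RDP without subsampling.

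Next I would apply Poisson subsampling amplification. The batch is drawn with inclusion probability $q=B/N$ (Line~11 of Algorithm~\ref{alg: dpsgd with private selection}), so I invoke the tight subsampled-RDP bound (e.g.\ the Wang--Balle--Kasiviswanathan / Zhu--Wang style moment expansion used in \cite{DBLP:conf/aistats/0005W22}), which expresses the amplified moment $e^{(\alpha-1)\epsilon_s(\alpha)}$ as a binomial series in $q$ whose $j$-th term is $q^j\binom{\alpha}{j}$ times a ``group-privacy'' factor controlled by $e^{(j-1)\epsilon(j)}$, together with the truncation $\min\{2,(e^{\epsilon(\infty)}-1)^j\}$ that sharpens the bound when the per-instance mechanism is close to pure DP (the $(e^{\epsilon(\infty)}-1)^j$ piece) versus when it is not (the crude constant $2$). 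The $j=2$ term gets the extra refinement $\min\{4(e^{\epsilon(2)}-1),\,e^{\epsilon(2)}\min\{2,(e^{\epsilon(\infty)}-1)^2\}\}$, which is the known improved second-order bound for subsampled Gaussian-type mechanisms; here I would write $p$ for the relevant per-instance probability mass / $q$ (matching the notation in \cite{DBLP:conf/aistats/0005W22}) and carry the $+1$ that corresponds to the $j=0$ term of the expansion. Taking $\frac{1}{\alpha-1}\log(\cdot)$ of the whole series gives exactly the stated $\epsilon_s(\alpha)$, and the $\delta_t$-approximate flag passes through subsampling unchanged.

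The main obstacle I expect is the bookkeeping in the second step: correctly instantiating the generic subsampled-approximate-RDP expansion with the \emph{composed} two-stage mechanism as the base, and in particular verifying that the approximate-RDP slack $\delta_t$ (which arises from the PTR test, an event that depends on the neighboring dataset in an asymmetric way) interacts correctly with the subsampling operator — i.e.\ that one may first condition on the good events $E,E'$ of Definition~\ref{def: approximated renyi differential privacy}, apply the exact subsampled-RDP bound to the conditioned mechanisms, and only then pay $q\delta_t$ (or, conservatively, $\delta_t$) for the slack. I would handle this by appealing to the composition lemma for approximate RDP (Lemma~\ref{lemma: approximated rdp composition}) to reduce to the zero-$\delta$ case on the conditioned mechanisms, then invoking the subsampling amplification there, being careful that the Gumbel step's sensitivity is $2C_2$ (hence the $8\sigma_r^2$ rather than $2\sigma_r^2$ in $\epsilon(\alpha)$) and that the one-sided nature of Report-Noisy-Max does not break the two-sided divergence bound needed for RDP. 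Everything else — the explicit Gaussian/Gumbel RDP formulas and the algebra of the binomial series — is routine and I would not spell it out in detail.
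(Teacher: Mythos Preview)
Your proposal is correct and follows essentially the same three-step architecture as the paper's proof: (i) the Gumbel/Report-Noisy-Max step yields $(\alpha,\alpha/(8\sigma_r^2))$-RDP, (ii) the Gaussian PTR step yields $\delta_t$-approximate-$(\alpha,\alpha/(2\sigma_p^2))$-RDP, and (iii) after composing these two, one invokes the subsampled-RDP amplification bound of \cite{DBLP:conf/aistats/WangBK19} to obtain $\epsilon_s(\alpha)$. The paper attributes the $\alpha/(8\sigma_r^2)$ term specifically to the ``Bounded Range'' analysis of the exponential mechanism \cite{DBLP:conf/nips/DurfeeR19} rather than the sensitivity computation you sketch, but the conclusion is the same; your discussion of how the $\delta_t$ slack interacts with subsampling is in fact more careful than what the paper writes, which simply applies the amplification theorem without further comment.
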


\begin{proof}
The exponential mechanism we use at the first stage is a Gumbel noise style implementation \cite{DBLP:conf/nips/DurfeeR19} which is analyzed with RDP via a ``Bounded Range'' property for tighter bound. 
This implementation of exponential mechanism satisfies $(\alpha, \frac{\alpha}{8\delta_r^2})-RDP$.

We adopt a PTR variant with Gaussian noise \cite{DBLP:conf/aistats/0005W22} for the second stage.
Since there is a probability bounded by $\delta_t$ that the test in PTR may fail, the algorithm satisfies $\delta_t$-approximate-$(\alpha, \frac{\alpha}{2\sigma_p^2})$-RDP.

Thus, the private selection without subsampling satisfies $\delta_t$-approximate-$(\alpha, (\frac{\alpha}{8\sigma_r^2}+\frac{\alpha}{2\sigma_p^2}))$-RDP by composition.

Since the input of Algorithm \ref{alg: private selection} is computed from a subsampled batch of the original dataset(Line 11 in Algorithm \ref{alg: dpsgd with private selection}), the privacy guarantee can be amplified through the privacy amplification theorem for RDP with subsampled mechanism \cite{DBLP:conf/aistats/WangBK19}.
\end{proof}

\section{Experiment Setting}
\subsection{Datasets} \label{appendix: dataset}
We utilize two real-world knowledge graph benchmark datasets:
\begin{itemize}
    \item FB15k-237 \cite{DBLP:conf/emnlp/ToutanovaCPPCG15}: It is a subset with $237$ relations, $14505$ entities, and 310116 triples extracted from a large-scale knowledge graph Freebase \cite{DBLP:conf/sigmod/BollackerEPST08}.
    \item NELL-995 \cite{DBLP:conf/emnlp/XiongHW17}: It is a subset of NELL \cite{DBLP:conf/aaai/CarlsonBKSHM10} with $14505$ entities, $237$ relations, and 154213 triples which is built from the Web via an agent called Never-Ending Language Learner.
\end{itemize}
For each client, we randomly select a subset of entities and assign triples and relations according to the subset of entities.
We remove some triples to ensure that the triple sets corresponding to overlapping entities are different among clients.
And we use the same method to split validation and testing triples from the source dataset.


\subsection{Hyper-parameters Setting} \label{appendix: parameter}
For the federated KGE training, we set the batch size as $64$, and embedding dimension as $128$ for both entity and relation.
Adam \cite{DBLP:journals/corr/KingmaB14} with the initial learning rate of $0.001$ is used for SGD.
We set the number of negative sampling as $256$, the self-adversarial negative sampling temperature $\alpha$ as $1$, and the margin $\gamma$ as $10$.

When training the KGE model with differential privacy, we set the $\sigma, \sigma_r, \sigma_p$ as $1$, and the overall $\delta$ as $10^{-5}$. The clipping bound $C_1$ and $C_2$ is $1.2$ and $0.8$ respectively.
The sampling ratio $q$ can be computed with batch size $B$, and train data size $N$, e.g. $q=B/N$.
For the adaptive allocation approach, the decay coefficient $\eta$ is $0.95$, and the threshold $\Delta$ is $0.001$.
The overall privacy loss $(\epsilon, \delta)$ can be tracked with the above parameters during the training procedure.
For a preset $\epsilon$, we abort the training when the overall privacy loss reaches $\epsilon$.
We vary the preset $\epsilon$ in $\{2, 4, 8, 16, 24, 32, 48, 64\}$.

\subsection{Metrics} \label{appendix: metric}
\partitle{Attack} We use \textbf{\it F1-score} to measure the success rate of the attacks,
\[F_1=2\times \frac{Precision\times Recall}{Precision+ Recall}.\]

\partitle{Utility} The utility of a KGE model depends on its performance on the link prediction task, which focuses on predicting the missing part in a triple like $(h, r, ?)$.
The prediction procedure will report a ranked list of all possible entities, where the entity with a higher rank is more likely to be the right answer.
We report \textbf{\it Mean Reciprocal Rank (MRR)} and \textbf{\it Hits at N (Hits@N)} to evaluate the model performance on link prediction.
\[MRR=\frac{1}{|T|}\sum_{i=1}^{|T|}\frac{1}{rank_i},\]
\[Hits@N=\frac{1}{|T|}\sum_{i=1}^{|T|}\mathbb{I}(rank_i\le n),\]
where $T$ is the test triple set and $rank_i$ is the rank of the true answer.






\section{Supplementary Discussions}
\subsection{Attack Limitations}\label{appendix: attack limitations}
We would like to point out that there are still several limitations of the above attacks. Some of the limitations can be broken through when the server collaborates with a client. For example, the $\mathcal{A}_{SI}$ no longer requires the auxiliary dataset due to the relation information from the collusive client, while the $\mathcal{A}_{CIA}$ and $\mathcal{A}_{CIP}$ are no longer limited to overlapping entities since the collusive server could provide the complete entity embeddings collected from the victim.
However, some limitations still remain a problem. For instance, the $\mathcal{A}_{SI}$ has simpler entity embedding calculation only for transnational distance type of KGE models.
That being said, the proposed attacks suffice our purpose to reveal that FKGE suffers from significant privacy threats that should be defended against.


\subsection{DP Granularity} \label{appendix: granularity}
The two existing neighboring definitions for graph data \cite{DBLP:conf/icdm/HayLMJ09}, i.e., node-neighbor and edge-neighbor, are not suitable for the KGE model. 
The former defines two neighboring graphs if they differ in a single node. However, the KGE model outputs an embedding vector for each node (a.k.a., each entity in KG), which renders the node-neighbor definition inapplicable since the output embedding already reveals the existence of the corresponding entity. The latter regards neighboring graphs as differing in a single edge, which only reflects the existence of KG relations but fails to capture the type information in KG relations. 

\subsection{Inaccurate DP analysis of existing works} \label{appendix: inaccurate DP}
Han et al. \cite{DBLP:journals/ws/HanDGCB22} propose a DPSGD-based differentially private KGE algorithm for a single knowledge graph in the centralized learning setting \cite{DBLP:conf/ccs/AbadiCGMMT016}. Although they also attempt to only perturb the active gradient elements binding to the sampled entities, they fail to conduct the selection in a differentially private manner. As a result, the overall DP guarantee is not satisfied as believed by the paper, because without DP perturbation the positions of the updated entities precisely tell which entities are in the training dataset and selected for update during iterations.

Peng et al. \cite{DBLP:conf/cikm/PengLSZ021} propose a PATE-GAN-based differentially private KGE algorithm for multiple knowledge graphs in the decentralized setting with the off-the-shelf PATE-GAN technique \cite{DBLP:conf/iclr/JordonYS19a}. 
Unlike traditional PATE-GAN usage where it feeds the generator with random inputs, \cite{DBLP:conf/cikm/PengLSZ021} feeds the generator with embeddings extracted from training KGs that should have been protected with DP. 

\end{document}